\documentclass[runningheads]{llncs}
\usepackage[T1]{fontenc}
\usepackage{graphicx}
\usepackage{hyperref}
\usepackage{color}

\usepackage{verbatim}
\usepackage{amsmath}
\usepackage[capitalize]{cleveref}
\crefname{algocf}{Algorithm}{algorithms}
\crefname{observation}{Observation}{observations}

\newcommand{\defproblem}[3]{
  \vspace{2mm}
\noindent\fbox{
  \begin{minipage}{0.94\textwidth}
  \textsc{#1}\\
  {\bf{Input:}} #2  \\
  {\bf{Output:}} #3
  \end{minipage}
  }
  \vspace{2mm}
}

\spnewtheorem{observation}{Observation}{\itshape}{\rmfamily}

\usepackage{todonotes}

\usepackage{tikz}
\usetikzlibrary{fit,patterns,decorations.pathmorphing,decorations.pathreplacing,calc}

\usepackage[ruled,noend,linesnumbered]{algorithm2e}
\SetKw{Continue}{continue}
\SetKw{Break}{break}

\newcommand{\dd}{\mathinner{.\,.}}

\newcommand{\fact}{\mathcal{F}}

\newcommand{\dol}{\mathtt{\$}}
\newcommand{\factT}{\mathcal{F}_T}

\newcommand{\false}{\mathbf{false}}

\newcommand{\BWT}{\mathtt{BWT}}
\newcommand{\SA}{\mathtt{SA}}
\newcommand{\ISA}{\mathtt{ISA}}

\newcommand{\LCE}{\mathtt{LCE}}
\newcommand{\LCP}{\mathtt{LCP}}
\newcommand{\lcp}{\mathit{lcp}}

\newcommand{\rev}[1]{\overline{#1}}
\newcommand{\TR}{\overline{T}}
\newcommand{\BWTR}{\overline{\mathtt{BWT}}}
\newcommand{\SAR}{\overline{\mathtt{SA}}}
\newcommand{\LCER}{\overline{\mathtt{LCE}}}
\newcommand{\LCPR}{\overline{\mathtt{LCP}}}
\newcommand{\LFR}{\overline{\mathtt{LF}}}
\newcommand{\ISAR}{\overline{\mathtt{ISA}}}
\newcommand{\rR}{\overline{r}}

\newcommand{\sfxtset}{\mathcal{X}}
\newcommand{\SRE}{\mathtt{SRE}}

\newcommand{\colt}{<_{\mathtt{co}}}

\begin{document}
\title{Computing Smallest Suffixient Arrays \\ in Sublinear Time}
%
%
\author{Hiroto Fujimaru\inst{1} \and Gonzalo Navarro\inst{2,3} \and Francisco Olivares\inst{2,3} \and Jakub Radoszewski\inst{4} \and Giuseppe Romana\inst{5} \and Cristian Urbina\inst{4}}
\authorrunning{H. Fujimaru et al.}
%
\institute{Department of Information Science and Technology, Kyushu University, Japan \email{fujimaru.hiroto.134@s.kyushu-u.ac.jp}\and Department of Computer Science, University of Chile, Chile \email{\{gnavarro,folivares\}@uchile.cl}\and Center for Biotechnology and Bioengineering (CeBiB),  Chile  \and
Faculty of Mathematics, Informatics and Mechanics, University of Warsaw, Poland \email{\{j.radoszewski,c.urbina-gallegos\}@uw.edu.pl}\and Faculty of Mathematics, University of Palermo,  Italy\\ \email{giuseppe.romana01@unipa.it}}
\maketitle              
\begin{abstract}
A suffixient array is a novel data structure that, when combined with an index providing direct access on a text $T$, allows us to answer a variety of pattern matching queries. 
In this work, we show how to compute a smallest suffixient array for $T[1\dd n]$ in $O(\frac{n\log \sigma}{\sqrt{\log n}}+\min(r,\rR)\log^\epsilon n)$ time for any $\epsilon > 0$, where $\sigma$ is the alphabet size of $T$ and $r$ and $\rR$ are the numbers of equal-letter runs of the Burrows-Wheeler transforms of $T$ and its reverse $\TR$, respectively. This time complexity becomes sublinear when $\sigma$ is small enough and $\min(r,\rR)=o(\frac{n}{\log^\epsilon n})$, yielding an asymptotic improvement over state-of-the-art algorithms. We also present a series of connected algorithmic results.
\keywords{Suffixient arrays \and  Sublinear-time algorithms \and Burrows-Wheeler transform}
\end{abstract}

\section{Introduction}

One of the most relevant challenges in modern data compression is to represent in compressed space the huge and highly repetitive text collections that naturally arise in fields like Bioinformatics \cite{Genome1k,MediniDTMR05,Wang22}, so as to query them directly in compressed form. To do so, several measures of repetitiveness have been proposed to evaluate the compression effectiveness of those representations \cite{Nav22:survey-a,Nav22:survey-b}. Among others, {\em suffixient arrays} \cite{DepuydtGLMP:ArXiv23,cenzato2025suffixientarraysnewefficient} have received much attention in recent years. A suffixient array is a data structure that, once built on a text $T[1\dd n]$, and provided with a representation of $T$ that provides efficient random access to it, can efficiently find one occurrence of $P$ in $T$, and more generally, of each maximal substring of $P$ that occurs in $T$ (called ``maximal exact matches'', or ``MEMs''). This functionality is inferior to that of suffix arrays, but in exchange suffixient arrays can be much smaller on repetitive text collections, and they were shown to be faster in practice than than other well-known repetitiveness-aware compressed suffix arrays, such as the $r$-index \cite{GNP20}.

More concretely, the size of a suffixient array is linear in the size of a {\em suffixient set} of $T$, a subset of the positions of $T$ such that any right-extension (i.e., one-character extension of a right-maximal substring) appears aligned at its rightmost position with some suffixient set position \cite{DepuydtGLMP:ArXiv23}. In a way, a suffixient set captures all the different substrings appearing in the text. It is natural then to aim for suffixient sets of minimum cardinality \cite{CenzatoOP:spire23}, which has been called $\chi$ and studied as a repetitiveness measure in its own right \cite{cenzato2025suffixientarraysnewefficient,NRU_SPIRE_2025,FNRU_arxiv,chi_reachable}.

Besides the study of $\chi$, a relevant challenge is how to efficiently find smallest suffixient sets and compute their  suffixient arrays (which are the lexicographic order of the reversed prefixes that end at the positions of the suffixient sets). Cenzato et al.\ showed how to find smallest suffixient sets in linear time and compressed space \cite{CenzatoOP:spire23}, and how to test if a given set is suffixient \cite{CenzatoOP:arXiv25}. Other constructions are linear-time online \cite{FNRU_arxiv}, or near-real-time \cite{koppl2026smallestsuffixientsetmaintenance}. The smallest suffixient arrays can also be constructed in linear time \cite{DBLP:conf/cpm/BonizzoniGR26,cenzato2025suffixientarraysnewefficient}.

In this paper we present the first {\em sublinear-time} algorithm to build smallest suffixient arrays under the RAM model of computation (where computer words of $\Omega(\log n)$ bits are manipulated in constant time). We assume that the input text, which is over an alphabet $[0\dd\sigma)$, comes packed into $O((n\log \sigma)/\log n)$ words, and build a smallest suffixient array in time $O(n\log\sigma/\sqrt{\log n} + r\log^\epsilon n)$ for any $\epsilon>0$, where $r$ is the number of runs in the Burrows-Wheeler Transform (BWT) \cite{BW94} of $T$. Measure $r \le n$ is also sensitive to repetitiveness; in particular it holds that $\chi \le 2r$ \cite{NRU_SPIRE_2025}. More precisely, our contributions are as follows:

\begin{enumerate}
\item We give a new linear-time algorithm for computing a smallest suffixient array starting from the BWT, the suffix array, and the longest common prefix array of $T$. This offers an alternative to the algorithms of Cenzato et al.~\cite{cenzato2025suffixientarraysnewefficient}, which require those structures built on the reversed text instead. Our algorithm uses less working space than Fujimaru et al.'s \cite{FNRU_arxiv} online algorithm for constructing suffixient sets. Further, it is very intuitive and easy to implement. As a first application, we use our algorithm to build suffixient arrays in compressed working space $O((r+\rR)\log n)$, and time $O(n(\log (r+\rR)+\log\log n))$, if the BWT of the reversed text has $\rR$ runs.

\item We optimize our algorithm to work in sublinear time under the packed setting, using recent data structures 
\cite{sublinear_BWT_construction_fullversion,sublinear_suffix_array_construction} for sublinear time computation of the BWT, longest common extensions, and compressed suffix trees. We then compute a smallest suffixient array for $T$ in $O(n\log\sigma/\sqrt{\log n} + r\log^\epsilon n)$ time, which 
is sublinear when $\log\sigma = o(\sqrt{\log n})$ and $r = o(n/\log^\epsilon n)$ for some $\epsilon>0$. The working space is $O(n\log\sigma/\log n + r)$.

\item We show that, when $\log\sigma=o(\sqrt{\log n})$, this result yields an $o(n)$-time algorithm to build a suffixient array whose size is an $O(\log^{1+\epsilon} n)$-approximation to $\chi$. Further, size $O(\chi)$ is not reached only when $\chi = \Omega(n/\log^{\epsilon} n)$, that is, when the text is not so repetitive and thus it is not very interesting to have a suffixient array instead of a plain suffix array. By weaking the approximation factor to $O(\log^{3/2+\epsilon}n)$, we reach the base construction time $O(n\log\sigma/\sqrt{\log n})$.

\item We adapt Algorithm 6 of Cenzato et al.~\cite{cenzato2025suffixientarraysnewefficient} to the packed setting, achieving the same sublinear time and working space on the reverse text.

\item Combining both results, we achieve $O(n\log\sigma/\sqrt{\log n} + \min(r,\overline{r})\log^\epsilon n)$-time computation of a smallest suffixient array for $T$, within working \sloppy space $O(n\log\sigma/\log n + \min(r,\overline{r}))$. Since $r = \Omega(\overline{r} \log n)$ on some string families \cite{GiulianiILPST21}, this symmetry yields strictly better complexity.

\item As a byproduct, we improve the $O(n)$-time construction of Shibata and Bannai~\cite{chi_reachable} of an $O(\chi(T))$-sized string representation to $O(n\log\sigma/\sqrt{\log n} + r\log^\epsilon n)$ time and $O(n\log\sigma/\log n + r)$ space.
\end{enumerate}

\section{Terminology}

We consider an alphabet $\Sigma=[0, \sigma)$ of size $\sigma$. A \emph{string} $x[1\dd n]$ of length $n$ is a finite sequence $x[1]x[2]\cdots x[n]$ of $n$ elements in $\Sigma$. The \emph{empty string} $\varepsilon$ is the unique string of length $0$. We let $\Sigma^*$ be the set of all strings over $\Sigma$, $\Sigma^+=\Sigma^*\setminus \{\varepsilon\}$, and $\Sigma^k$ the set of strings of length $k$ for $k \ge 0$. We let $x[i\dd j]=x[i]x[i+1]\cdots x[j]$ if $1\le i\le j \le n$; if $j < i$, we let $x[i\dd j] = \varepsilon$, and use curved brackets to exclude the borders, e.g. $x(i\dd j)=x[i+1]\cdots x[j-1]$. The concatenation of two strings $x[1\dd n]$ and $y[1\dd m]$ is the string $x \cdot y = xy = x[1]\cdots x[n]y[1]\cdots y[m]$. A \emph{substring} (or \emph{factor}) of $x$ is any string $y$ such that $x=vyz$ for some $v, z\in \Sigma^*$. A \emph{prefix} (resp. \emph{suffix}) of $x$ is any substring $y$ such that $x=yz$ (resp. $x =zy$) for some string $z \in \Sigma^*$. We denote $\fact_x(k)$ the set of substrings of length $k$ of $x$ for any $k\ge 0$, and $\fact_x$ the set of substrings of $x$ of any possible length. The \emph{reverse} of a string $x[1\dd n]$, denoted as $\rev{x}$, is the string $x[n]x[n-1]\cdots x[1]$. The \emph{lexicographic order} between two strings is defined as follows: $x < y$ if $y=xz$ with $z\in\Sigma^+$, or $x=zax'$ and $y =zby'$ for some $z, x',y'\in \Sigma^*$ and $a,b\in \Sigma$ such that $a < b$. The \emph{co-lexicographic order} is defined as $x \colt y \iff \rev{x} < \rev{y}$.

We denote a string $T[1\dd n]$ and call it a \emph{text} to emphasize it is the target string to be compressed or indexed. Further, we assume $T[n]=\dol$, where $\dol$ is a special symbol that appears (only) at the end of $T$ and is smaller than any other symbol in $\Sigma$. 
We use lowercase  letters like $x,y,z$ to denote arbitrary strings.

The \emph{suffix array (SA)} of $T[1\dd n]$ is an array $\SA[1\dd n]$ such that $\SA[i]=j$ if $T[j\dd n]$ is the $i$-th suffix of $T$ in lexicographic order. By $\ISA[1 \dd n]$ we denote the inverse permutation of the suffix array, that is, an array such that $\ISA[i]=j \iff \SA[j]=i$ for $i,j \in [1,n]$.

The \emph{longest common prefix} of two strings $x$ and $y$, denoted $\lcp(x,y)$, is the longest string that is a prefix of $x$ and $y$. A \emph{longest common extension query} $\LCE(i, j)$ returns the length of the longest common prefix between the suffixes $T[i\dd n]$ and $T[j \dd n]$. The \emph{longest common prefix array} is defined as $\LCP[i]=\LCE(\SA[i],\SA[i-1])$ for $i > 1$ and $\LCP[1]=0$.

The \emph{Burrows--Wheeler transform (BWT)} \cite{BW94} of a text $T$ is a permutation obtained by sorting all the suffixes of $T$ in lexicographic order and concatenating their preceding symbols. Formally, $\BWT(T) = T[\SA[1]-1] \cdots T[\SA[n]-1]$ where we let $T[0]=T[n]$. For any string $x$, its \emph{run-length encoding} is the unique sequence of pairs $(a_1,p_1),\ldots,(a_t,p_t)$ such that $x=a_1^{p_1}\cdots a_t^{p_t}$ and $a_{i-1}\neq a_{i}$ for all $i \in [2,t]$. In particular, we use $r$ to denote the size of the run-length encoding of $\BWT(T)$, which is considered a measure of the repetitiveness of the text  \cite{Nav22:survey-a}. A position $i$ in $\BWT(T)$ is said to be a \emph{BWT run-break} if $\BWT[i-1]\neq \BWT[i]$. 

We use $\BWTR,\SAR,\LCPR,\ISAR,\LCER$ and $\rR$, to denote the corresponding data structures and values for $\TR$ instead of $T$.

\subsection{Suffixient sets and supermaximal right extensions}

The set of \emph{right extensions}  of $T$ is $E_T=\{xa\in\factT\, |\,\exists b \neq a \text{ such that } xb \in \factT \}$. A \emph{supermaximal  right extension} is a right extension that is not a proper suffix of any other right extension. The set of supermaximal right extensions of $T$ is $\SRE(T) =\{x\in E_T\,|\, \forall y \in E_T, y=zx \Rightarrow y=x\}$.

\begin{definition}A \emph{suffixient set} $\sfxtset\subseteq[1,n]$ for $T[1\dd n]$ is a set of positions of $T$ such that every supermaximal right extension $x \in \SRE(T)$ appears as a suffix of some prefix $T[1\dd i]$ with $i \in \sfxtset$.
\end{definition}

\begin{definition}
  The measure $\chi(T)$ is the size of a smallest suffixient set for $T$.
\end{definition}

It is known that $\chi(T)=|\SRE(T)|$ \cite{cenzato2025suffixientarraysnewefficient}.

\begin{definition} \label{def:sarray}
  A \emph{suffixient array} $A$ is a co-lexicographically sorted suffixient set. That is, if $S$ is a suffixient set, then $A[j]=i$ if and only if $T[1\dd i]$ is the $j$-th prefix on the set $\{T[1\dd i]\,|\, i\in S\}$ under $\colt$.
\end{definition}

\begin{example}
Let $T = 010\underline{0}10\underline{1}0\underline{\dol}$, where the underlined positions correspond to a smallest suffixient set $S=\{4,7,9\}$. The prefixes corresponding to those positions satisfy $T[1\dd 9] \colt T[1\dd 4] \colt T[1 \dd 7]$, hence a suffixient array is $A = [9,4,7]$.
\end{example}

\subsection{Computational model}

We assume a computer word of size $w=\Omega(\log n)$, where $n = |T|$. The \emph{packed representation} of a text $T \in [0,\sigma)^n$ is a list obtained by storing $\Theta(\log_\sigma n)$ letters per machine word, thus representing $T$ in $O(n/\log_{\sigma}n)=O(n \log \sigma/\log n)$ machine words. If $T$ is given in the packed representation, we say that $T$ is a \emph{packed string}.
We rely on the following results on packed strings, which yield sublinear times on small enough alphabets, that is, when $\log\sigma = o(\sqrt{\log n})$.

\begin{theorem}[{\cite[Thm.~5.4]{sublinear_BWT_construction_fullversion}}]LCE queries in a text $T\in [0, \sigma)^n$ with $\sigma=n^{O(1)}$ can be answered in $O(1)$ time after $O(n/\log_\sigma n)$-time preprocessing of the packed representation of $T$. 
\end{theorem}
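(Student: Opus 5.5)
This theorem is cited from prior work (Kempa and Kociumaka's sublinear BWT construction), so here is how I would reconstruct its proof rather than derive it from earlier material in this paper. The approach is a \emph{string synchronizing set} combined with LCE queries on a reduced text of sampled suffixes. Fix $\tau = \Theta(\log_\sigma n)$ small enough, say $\tau = \mu\log_\sigma n$ with a small constant $\mu$, so that any substring of length $3\tau$ fits into a constant number of machine words and can be handled by lookup tables of size $\sigma^{O(\tau)} = n^{O(\mu)} = o(n)$.

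\emph{Step 1: compute the synchronizing set.} I would build a $\tau$-synchronizing set $\mathsf{S}$ of size $O(n/\tau)$ in $O(n/\tau)$ time from the packed text. This step is deterministic and uses the packed-string techniques of that paper. The defining property is consistency: whether $i \in \mathsf{S}$ depends only on $T[i\dd i+2\tau)$. A second property is density: every window of length $\tau$ contains a position of $\mathsf{S}$, except inside highly periodic regions whose period is at most $\tau/3$.

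\emph{Step 2: answer LCE queries.} To answer $\LCE(i,j)$, I first compare $T[i\dd i+3\tau)$ with $T[j\dd j+3\tau)$ word-wise in $O(1)$ time. If a mismatch appears there, the answer is immediate. Otherwise there are two cases.

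If the common prefix is not periodic, let $s_i$ and $s_j$ be the successors of $i$ and $j$ in $\mathsf{S}$. Consistency gives $s_i - i = s_j - j = \delta \le \tau$. I reduce the query to an LCE on the string $T'$ of length $|\mathsf{S}|$. In $T'$, each $s\in\mathsf{S}$ is replaced by the rank of the block $T[s\dd \mathrm{next}(s)+2\tau)$. These ranks come from sorting the blocks with radix sort over $O(1)$-word keys, in $O(n/\tau)$ time. A standard LCE structure on $T'$ (suffix array, LCP array and RMQ) costs $O(|T'|) = O(n/\tau)$ time, since $T'$ is over an integer alphabet of polynomial size. The LCE value on $T'$ counts matching blocks. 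I convert it back to text length through the position of the first mismatching block, finishing with an $O(1)$-time word comparison over that block's at most $3\tau$ characters.

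If the prefix is periodic, I precompute for every maximal run with period at most $\tau/3$ its end, or equivalently, for each position its run-end. I store this only at $O(n/\tau)$ sampled positions and extend it by table lookup. The LCE is then resolved by comparing the distances to the run ends. If those distances differ, the answer is determined directly. If they are equal, I continue from the run end, which falls back to the first case.

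I expect the main obstacle to be making the reduction exact at block boundaries. The concern is that $\LCE_{T'}$ times the average block length does not directly give the answer. I would handle this by storing prefix sums of block lengths, so that the $k$-th matching block maps to a text offset, and by choosing the block definition so that equal block names imply equal text substrings extending up to the next sampled position. The periodic runs require the density argument to guarantee that runs can be listed in $O(n/\tau)$ total time. All tables have size $n^{O(\mu)}$, so the total preprocessing is $O(n/\log_\sigma n)$.
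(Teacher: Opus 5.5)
The paper gives no proof of this statement; it quotes it from Kempa and Kociumaka. Your architecture matches theirs: a $\tau$-synchronizing set with $\tau=\Theta(\log_\sigma n)$, an LCE structure on the string $T'$ of block names, and separate handling of short-period regions. There is a genuine gap, though, and it sits in the periodic regions, which is where the difficulty of that proof lies.

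\textbf{Long blocks.} Density bounds $\mathrm{next}(s)-s$ by $\tau$ only outside runs of period at most $\tau/3$. Inside such runs the gap can be $\Theta(n)$. So the blocks $T[s\dd \mathrm{next}(s)+2\tau)$ are not $O(1)$-word keys, and your final ``word comparison over that block's at most $3\tau$ characters'' fails whenever the first mismatching block is long. The missing observation is that a long block is rigid. Suppose $\mathrm{next}(s)-s>\tau$. By the density condition on windows $T[i\dd i+3\tau-1)$ and Fine--Wilf, $T[s+1\dd e]$ with $e=\mathrm{next}(s)+2\tau-2$ is a maximal stretch of period $p=\mathrm{per}(T[s+1\dd s+3\tau))\le\tau/3$. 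The last block character $T[e+1]$ breaks this period. Hence a long block is determined by its first $3\tau$ characters, its length and its last character, which gives an injective $O(1)$-word name. Moreover, two distinct long blocks of lengths $L,L'$ with equal $3\tau$-prefixes have LCE exactly $\min(L,L')-1$.

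\textbf{Periodic queries.} Your claim ``continue from the run end, which falls back to the first case'' is false. Another run of period at most $\tau/3$ can start at or overlap the end of the current one, e.g.\ $(\mathtt{ab})^k\mathtt{b}^k$. The new window may then be periodic again, and a chain of such runs costs $\Theta(n/\tau)$ iterations rather than $O(1)$. The same identity fixes this and makes the precomputed run ends at sampled positions unnecessary. If $T[i\dd i+3\tau-1)$ has period at most $\tau/3$, its run ends at $e_i=\mathrm{succ}_{\mathsf S}(i)+2\tau-2$. There are then two cases:
\begin{itemize}
\item If $e_i-i\ne e_j-j$, then $\LCE(i,j)=\min(e_i-i,e_j-j)+1$.
\item Otherwise $\mathrm{succ}_{\mathsf S}(i)-i=\mathrm{succ}_{\mathsf S}(j)-j$. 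You jump straight to the two successors and answer with one LCE query on $T'$, whose long blocks absorb any subsequent runs.
\end{itemize}

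\textbf{Smaller omissions.} You need $\mathrm{succ}_{\mathsf S}$, and the map from $s\in\mathsf S$ to its index in $T'$, in $O(1)$ time, e.g.\ via rank/select on the bitvector of $\mathsf S$. For $\sigma>n^{\mu}$, where $\tau<1$, you need a fallback to a standard $O(n)=O(n/\log_\sigma n)$ construction.
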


\begin{theorem}[{\cite[Thm.~6.3]{sublinear_BWT_construction_fullversion}}]Given a packed text $T\in [0, \sigma)^n$, the BWT of $T$ can be constructed in $O(n\log \sigma /\sqrt{\log n})$ time and $O(n/\log_\sigma n)$ space.
\end{theorem}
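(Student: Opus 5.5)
This is the cited construction of Kempa and Kociumaka, and I would reproduce its outline with string synchronizing sets. Set $\tau=\delta\log_\sigma n$ for a small constant $\delta>0$. First, compute a $\tau$-synchronizing set $S$ of $T$ of size $O(n/\tau)$ from the packed representation in $O(n/\tau)$ time. Its key property is \emph{consistency}: membership of a position $i$ in $S$ depends only on $T[i\dd i+2\tau)$. Second, sort the suffixes starting at positions of $S$. To do this, replace each $s\in S$ by the fragment $T[s\dd s')$, where $s'$ is the next synchronizing position, and name these fragments by rank using packed integer sorting. The resulting string has length $O(n/\tau)$ over a polynomial alphabet, so a linear-time suffix array algorithm on it gives the order of the suffixes $T[s\dd n]$ in $O(n/\tau)=O(n/\log_\sigma n)$ time.

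Next, split all positions into two classes. The first class consists of positions $i$ whose length-$3\tau$ context is not periodic; each such $i$ has a synchronizing position $\mathrm{succ}(i)\in[i,i+\tau)$. The order of $T[i\dd n]$ among these suffixes is then determined by the pair $(T[i\dd \mathrm{succ}(i)),\ \mathrm{rank}(\mathrm{succ}(i)))$, where the first component is a short string of at most $\tau$ characters that fits in a machine word. So the BWT restricted to this class, read in order, equals the concatenation, over the possible short strings $x$ in lexicographic order, of the characters $T[s-|x|-1]$ for those $s\in S$ (in suffix-rank order) such that $T[s-|x|\dd s)=x$. The second class consists of positions inside periodic regions with period at most $\tau/3$. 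These are handled run by run using Lyndon roots: within a run, suffixes are ordered by their Lyndon-root offset and by the run's extension and its direction. Their BWT contributions form long blocks of equal characters, which can be merged into the output in time proportional to the number of runs, that is, $O(n/\tau)$.

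The main obstacle is emitting the nonperiodic part in $o(n)$ time. Naively, each of the up to $\tau$ left-contexts of every $s\in S$ is touched individually, which costs $\Theta(n)$. I would follow the cited approach and build a packed wavelet-tree-like structure over the sequence of reversed length-$\tau$ left contexts $\rev{T[s-\tau\dd s)}$, listed in suffix-rank order of $s$. Enumerating the BWT characters for a fixed short string $x$ then amounts to a range-extraction query on the level of this structure corresponding to $|x|$. The bitvectors are built level by level with word-packed operations, giving $O(n\log\sigma/\sqrt{\log n})$ total time and $O(n/\log_\sigma n)$ words of space. The $\sqrt{\log n}$ denominator is where the difficulty lies: it comes from balancing the packed-sorting overhead per level against the number of levels, $\log(\sigma^\tau)$. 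I expect that this balancing, together with correctly interleaving the periodic and nonperiodic contributions, is the technically delicate step. The remaining steps are standard.
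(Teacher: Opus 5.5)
The paper gives no proof of this statement; it imports Kempa and Kociumaka's theorem. Your outline follows the architecture of their construction, but it has a genuine error: the keys omit the $2\tau$ characters of right context that consistency requires. This breaks both the sorting of $S$ and the nonperiodic part.

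The strings $T[i\dd \mathrm{succ}(i))$ are not prefix-free. Suppose $x=T[i\dd \mathrm{succ}(i))$ is a proper prefix of $x'=T[j\dd \mathrm{succ}(j))$. Then the comparison of $T[i\dd n]$ with $T[j\dd n]$ is decided by $T[\mathrm{succ}(i)\dd n]$ versus $T[j+|x|\dd n]$. Since $\mathrm{succ}(i)\in S$ and $j+|x|\notin S$, consistency only says these differ within $2\tau$ characters, not in which direction. So emitting the whole group of $x$ before the group of $x'$ misorders suffixes.

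A second problem: $s\in S$ with $T[s-|x|\dd s)=x$ does not imply $s=\mathrm{succ}(s-|x|)$. Whether positions of $(s-|x|,s)$ lie in $S$ depends on characters to the right of $s$. So, unless you add a condition that is not a property of $x$, a nonperiodic $i$ can be emitted once for every synchronizing position in $[i,i+\tau)$. Likewise, a wavelet-tree node labelled $\rev{x}$ mixes valid and invalid entries.

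Your sorting of $S$ has the same flaw: the ranks of the fragments $T[s\dd s')$ do not induce the suffix order when one fragment is a proper prefix of another.

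The fix is as follows:
\begin{itemize}
\item key a nonperiodic $i$ by $T[i\dd \mathrm{succ}(i)+2\tau)$;
\item name each $s\in S$ by $T[s\dd s'+2\tau)$;
\item build the wavelet tree on $\rev{T[s-\tau\dd s+2\tau)}$.
\end{itemize}
Consistency then does the work. If $T[i\dd \mathrm{succ}(i)+2\tau)$ occurs at $j$, every $j+p$ with $p\le \mathrm{succ}(i)-i$ has the same length-$2\tau$ context as $i+p$, so $\mathrm{succ}(j)=j+(\mathrm{succ}(i)-i)$. This makes the keys prefix-free: a proper prefix would occur at $j$ and force $\mathrm{succ}(j)-j=\mathrm{succ}(i)-i$, contradicting the length difference. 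It also makes the validity of a wavelet-tree node a property of its label alone. The same argument applies to the fragments naming $S$.

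You also have to handle fragments with $s'-s>\tau$. These do not fit in a word, so packed integer sorting does not apply to them. By density, however, they are periodic with period at most $\tau/3$ apart from their first and last characters. They can therefore be compared, and hence named, using a short prefix, their length, and their last character.

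Finally, the $\sqrt{\log n}$ factor is not a balancing you must design yourself. It is the cost of the known packed wavelet-tree construction (the one used in Lemma~\ref{lem:Ch}) applied to a sequence of length $O(n/\tau)$ over an alphabet of size $\sigma^{O(\tau)}$, i.e., $O((n/\tau)\cdot\tau\log\sigma/\sqrt{\log n})$.
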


\begin{theorem}[{\cite[Prop.~5.29 and 5.30]{sublinear_suffix_array_construction}}]Let $\epsilon > 0$ be any constant. Given a packed text $T\in [0, \sigma)^n$, we can construct a data structure in $O(n\log \sigma / \sqrt{\log n})$ time and $O(n/\log_\sigma n)$ working space, that can compute $\SA[i]$ and $\ISA[i]$ for any given index $i$ in $O(\log^\epsilon n)$ time.
\end{theorem}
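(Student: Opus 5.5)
We would follow the framework of Kempa and Kociumaka based on string synchronizing sets. Fix $\tau = \lfloor \delta \log_\sigma n\rfloor$ for a small constant $\delta>0$, so that every string of length $3\tau$ fits in $o(1)$ words and a table indexed by all such strings has size $n^{3\delta}=o(n/\log_\sigma n)$. The plan is to reduce both $\SA[i]$ and $\ISA[j]$ queries to three ingredients. The first is a $\tau$-synchronizing set $S$ of size $O(n/\tau)$. The second is the lexicographic order of the suffixes starting at positions of $S$. The third is a wavelet-tree-like structure over short string labels supporting rank and select in $O(\log^\epsilon n)$ time.

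\textbf{Step 1: the synchronizing set and its order.} We construct a $\tau$-synchronizing set $S$ of $T$ in $O(n/\tau)$ time on the packed text. We then sort the suffixes $T[s\dd n]$, $s\in S$, in $O(n/\tau)$ time, i.e.\ $O(n/\log_\sigma n)$. This uses the standard reduction: replace each fragment $T[s\dd s+3\tau)$ by its rank, giving a string of length $|S|$ over a polynomial alphabet, and sort its suffixes in linear time. Combined with the $\LCE$ structure of the first theorem above, this gives the order of all synchronizing suffixes.

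\textbf{Step 2: non-periodic suffixes.} Consider a position $j$ whose length-$3\tau$ context is non-periodic. Its successor $s_j=\min(S\cap[j,j+\tau))$ exists, and the suffix $T[j\dd n]$ is ordered first by the short string $T[j\dd s_j)$ and then by the rank of $T[s_j\dd n]$ among synchronizing suffixes. We group all such $j$ by the prefix $T[j\dd j+3\tau)$. Inside one group, the order is exactly the order of the synchronizing successors, with ties broken consistently.

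We then build a sequence $W$ that lists, for each $s\in S$ in sorted suffix order, the reversed strings $\rev{T[s-\tau\dd s)}$ (truncated at the preceding synchronizing position). On $W$ we build a wavelet tree of arity $\log^{\epsilon}n$. Using the packed wavelet-tree constructions of Babenko et al.\ / Munro et al., this takes $O(n\log\sigma/\sqrt{\log n})$ time. This construction time is the bottleneck that fixes the overall preprocessing bound. The resulting tree supports rank and select in $O(\log^\epsilon n)$ time.

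The two queries then work as follows.
\begin{itemize}
\item For $\SA[i]$: we first locate, via a precomputed prefix-count table over length-$3\tau$ strings, the group containing rank $i$ and the offset within that group. A select query on $W$ restricted to the corresponding label range then returns the synchronizing position, and subtracting the label length gives $\SA[i]$.
\item For $\ISA[j]$: we compute $s_j$ with a successor structure on $S$, and obtain the rank of $s_j$ from Step 1. A rank query on $W$ then counts the smaller suffixes in the group, which yields $\ISA[j]$.
\end{itemize}

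\textbf{Step 3: periodic suffixes (the main obstacle).} Positions inside $\tau$-runs, i.e.\ maximal fragments with period at most $\tau/3$, have no nearby synchronizing position. Here the plan is to represent each run by its Lyndon root $L$, its exponent, and the character following the run. Suffixes in such a run are ordered by three things: their phase within $L$, whether the run is followed by a larger or smaller character, and the length of the periodic remainder, with the comparison reversed in the "smaller" case. There are only $O(n/\tau)$ runs. We sort the run endpoints, which are synchronizing-like positions, together with the data of Step 1. Inside a group with fixed prefix and fixed type, the suffix order is then determined by (remainder length, rank of the run end). We would encode this with a second rank/select structure over pairs, adding the periodic remainder length as an extra coordinate and using an orthogonal range counting/selection structure on $O(n/\tau)$ points with $O(\log^\epsilon n)$ query time.

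Getting the counts right across the boundary between periodic and non-periodic groups, and in the reversed ordering case, is where we expect most of the technical work. Everything else is fairly standard once the synchronizing set is sorted.
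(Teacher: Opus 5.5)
\textbf{There is a genuine gap: the $O(\log^\epsilon n)$ query time is asserted, not obtained.} The paper gives no proof of this statement; it is imported from Kempa and Kociumaka (Prop.~5.29 and~5.30). Their proof does rest on string synchronizing sets, so your overall plan has the right shape: sort the synchronizing suffixes, reduce a non-periodic position to its synchronizing successor plus a count over left contexts listed in synchronizing-suffix order, and treat $\tau$-runs separately. But the only nontrivial part of the statement is $O(\log^\epsilon n)$ query time together with $O(n\log\sigma/\sqrt{\log n})$ construction, and the tools you name do not give it.

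\textbf{Step~2.} The entries of $W$ are symbols over an alphabet of size $\sigma^{\tau}\approx n^{\delta}$. What you must count or select is not occurrences of one symbol. It is the entries, within a range of $W$, that have $\rev{T[j\dd s_j)}$ as a \emph{prefix}, i.e., a range of symbols. In a wavelet tree of arity $\log^\epsilon n$ this costs time proportional to the height, $\Theta(\delta\log n/(\epsilon\log\log n))=\Theta(\log n/\log\log n)$. Raising the arity so the height drops to $O(\log^\epsilon n)$ only pushes the same prefix-counting problem into the last node, whose alphabet has size $2^{\Theta(\log^{1-\epsilon}n)}$.

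\textbf{Step~3.} Here the claim is impossible as stated. A generic orthogonal range counting/selection structure for $N=\Theta(n/\tau)$ points in $O(n/\log_\sigma n)=O(N)$ words cannot answer queries in $O(\log^\epsilon n)$ time. This is ruled out by P\u{a}tra\c{s}cu's $\Omega(\log N/\log\log N)$ lower bound for 2D dominance counting in near-linear space, and by the analogous bound of J{\o}rgensen and Larsen for range selection. Unless you exploit special structure of these point sets, which your plan does not, this step fails.

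\textbf{The missing idea.} It is Kempa and Kociumaka's: they eliminate orthogonal range queries altogether. Instead they phrase everything through \emph{prefix rank} and \emph{prefix selection} queries on a sequence of short strings: count, or select, among the first $i$ strings those having a given string $X$ as a prefix. They build a dedicated structure answering these in $O(\log^\epsilon n)$ time with $O(n\log\sigma/\sqrt{\log n})$ construction. This works because the query ranges are trie-node ranges of short labels rather than arbitrary rectangles.

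\textbf{A secondary, fixable error in Step~1.} Naming the fixed windows $T[s\dd s+3\tau)$ does not preserve the order of synchronizing suffixes. When consecutive synchronizing positions are more than $\tau$ apart (inside a $\tau$-run), the first mismatch of two synchronizing suffixes can lie deep in a periodic stretch that no aligned window covers. The next window names then compare unrelated text, for example the two characters that break periodicity, instead of one of them against the period's continuation. The correct reduction names the blocks $T[s_i\dd s_{i+1}+2\tau)$ between consecutive synchronizing positions. The long blocks are periodic and must be encoded compactly (root, phase, length, breaking context) so that they can be sorted in $O(n/\tau)$ time.
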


\section{Computing the suffixient array in linear time}

In this section, we present a linear time algorithm that computes a smallest suffixient array starting from $\BWT, \SA, \LCP,$ and a longest common extension data structure  $\LCER$ on the reversed text. Although this algorithm is unlikely to be faster than other state-of-the-art algorithms, it is conceptually very simple, and can be sped up by using proper sublinear-space data structures. In the following sections we show how to exploit this property in order to get an asymptotic improvement over state-of-the-art algorithms on most compressible texts.

We start by introducing a set of positions of $T$ that is suffixient (though not minimal in general), characterized by the following lemma.

\begin{lemma}\label{lem:breaks}
Let $xa$ be a supermaximal right extension of $T$, for $a \in \Sigma$. Then, there exists $i\in[1,n]$ such that $T[\SA[i]..\SA[i]+|x|]=xa$ and at least one of the following conditions is satisfied:
\begin{itemize}
\item $\BWT[i-1]\neq \BWT[i]$ and $\LCE(\SA[i-1],\SA[i])=|x|$ or
\item $\BWT[i+1]\neq \BWT[i]$ and $\LCE(\SA[i+1],\SA[i])=|x|$.
\end{itemize}
\end{lemma}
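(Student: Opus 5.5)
The plan is to work inside the suffix-array interval of $x$, pick one suffix at the border between the $xa$-block and the rest of that interval, and show that its $\BWT$ letter must differ from its neighbour's. Otherwise, extending $x$ one letter to the left would produce a right extension that has $xa$ as a proper suffix, contradicting supermaximality.

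First I fix the $\SA$ structure. All suffixes of $T$ having $x$ as a prefix occupy a contiguous range $[\ell\dd \rho]$ of $\SA$. Inside it, those having $xa$ as a prefix form a contiguous sub-range $[s\dd e]$. Since $xa\in E_T$, there is some $b\neq a$ with $xb\in\factT$. Hence $[\ell\dd\rho]$ contains an index outside $[s\dd e]$, namely the rank of a suffix starting with $xb$. (That suffix is not just $x$, since $x$ followed by $\dol$ would also give a different next letter.) Consequently $s>\ell$ or $e<\rho$.

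If $s>\ell$, I take $i=s$ and $j=s-1$. If $e<\rho$, I take $i=e$ and $j=e+1$. In both cases $T[\SA[i]\dd\SA[i]+|x|]=xa$. The suffix at rank $j$ starts with $x$ followed by a letter $d\neq a$. (It cannot be exactly $x$ and end there, because $T$ ends with the unique $\dol$ and $x$ plus a next letter occurs at every non-final position; I will check this boundary detail.) So $\LCE(\SA[j],\SA[i])=|x|$ exactly. It then remains to show $\BWT[j]\neq\BWT[i]$, which gives the first bullet when $j=i-1$ and the second when $j=i+1$.

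Suppose for contradiction that $\BWT[i]=\BWT[j]=c$. If $c=\dol$, both ranks would correspond to the suffix $T[1\dd n]$, because $\dol$ occurs only at $T[n]$ and $T[0]=T[n]$. That forces $i=j$, which is impossible. Hence $c\neq\dol$, and $c$ genuinely precedes both occurrences, so $cxa$ and $cxd$ are both substrings of $T$ with $d\neq a$. Thus $cxa\in E_T$, and it has $xa$ as a proper suffix, contradicting $xa\in\SRE(T)$. This also covers $x=\varepsilon$.

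The only delicate step is the boundary bookkeeping: that the $\dol$-preceded suffix is unique, and that the neighbour $j$ really shares exactly $|x|$ characters with the suffix at rank $i$ rather than fewer. The contiguity of the $x$-interval and of its $xa$ sub-interval handles the latter.
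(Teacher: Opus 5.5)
Your proof is correct and follows essentially the same route as the paper. You take the first or last index $i$ of the $xa$-block in the suffix array, next to a neighbour $j$ that starts with $x$ but not $xa$, so the LCE is exactly $|x|$, and you get $\BWT[i]\neq\BWT[j]$ from supermaximality, since otherwise $\BWT[i]\,xa$ would be a longer right extension. You also handle explicitly the case $\BWT[i]=\dol$ (i.e., $\SA[i]=1$), which the paper's proof leaves implicit.
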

\begin{proof}
As $xa$ is a supermaximal extension of $T$, there exists $b \in \Sigma$ such that $b \ne a$ and $xa,xb \in \factT$. Let us consider an integer interval $U = \{i\,:\,T[\SA[i]..\SA[i]+|x|]=xa\}$ and let $j$ be any index such that $T[\SA[j]..\SA[j]+|x|]=xb$.

Assume first that $j<k$ for all $k \in U$. Then let $i=\min U$. By the definition of $U$, $T[\SA[i]..\SA[i]+|x|]=xa$ holds. By the lexicographic order, we have $\LCE(\SA[i-1],\SA[i]) \ge \LCE(\SA[j],\SA[i])=|x|$ and $i-1 \not\in U$, so $T[\SA[i-1]+|x|] = c \neq a$ and thus $\LCE(\SA[i-1],\SA[i])=|x|$. Finally, we have $\BWT[i-1]\neq \BWT[i]$ as otherwise $xa$ would be a suffix of the string $\BWT[i] xa \in \factT$, which is a right extension because $\BWT[i-1] x c = \BWT[i] x c \in \factT$.

The case $j>k$ for all $k \in U$ is symmetric; we then have $\LCE(\SA[i+1],\SA[i])=|x|$ and $\BWT[i+1]\neq \BWT[i]$ for $i = \max U$ (so $T[\SA[i]..\SA[i]+|x|]=xa$).
\qed
\end{proof}

\renewcommand{\S}{\mathcal{S}}

\begin{definition} \label{def:S}
Let $\S$ be the following set of substrings of $T$:
$$\S = \{T[\SA[i-1]\dd \SA[i-1]+\LCP[i]],T[\SA[i]\dd \SA[i]+\LCP[i]]\,:\, \BWT[i-1]\neq \BWT[i]\}.$$
\end{definition}
See \cref{fig:main_algorithm}. Note that Lemma~\ref{lem:breaks} implies the following observation.

\begin{observation}
The set 
$\{\SA[i-1]+\LCP[i],\SA[i]+\LCP[i]\,:\, \BWT[i-1]\neq \BWT[i]\}$
is suffixient.    
\end{observation}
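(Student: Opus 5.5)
The plan is to read the Observation off Lemma~\ref{lem:breaks} and the definition of a suffixient set. Let $X$ denote the set in the statement. By definition, $X$ is suffixient if every $xa \in \SRE(T)$ (with $a\in\Sigma$) occurs in $T$ as a suffix of some prefix $T[1\dd p]$ with $p\in X$. Equivalently, we need $T[p-|x|\dd p]=xa$ for some $p \in X$. So I fix an arbitrary supermaximal right extension $xa$ and look for such a position $p$.

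First, apply Lemma~\ref{lem:breaks} to $xa$. It gives an index $i$ with $T[\SA[i]\dd\SA[i]+|x|]=xa$, together with one of two conditions, and I split into the corresponding cases.

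\textbf{Case 1:} $\BWT[i-1]\neq\BWT[i]$ and $\LCE(\SA[i-1],\SA[i])=|x|$. By the definition of the $\LCP$ array, $\LCP[i]=|x|$. Since $i$ is a BWT run-break, the position $p=\SA[i]+\LCP[i]=\SA[i]+|x|$ belongs to $X$. Then $T[p-|x|\dd p]=T[\SA[i]\dd\SA[i]+|x|]=xa$, which is exactly what we need.

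\textbf{Case 2:} $\BWT[i+1]\neq\BWT[i]$ and $\LCE(\SA[i+1],\SA[i])=|x|$. Here I apply the definition of $X$ with the run-break index $i':=i+1$. Its first element is $\SA[i'-1]+\LCP[i']=\SA[i]+\LCE(\SA[i],\SA[i+1])=\SA[i]+|x|$, so this position lies in $X$. As before, the length-$(|x|+1)$ substring ending there is $xa$. This case is the reason both $\SA[i-1]+\LCP[i]$ and $\SA[i]+\LCP[i]$ are included in $X$.

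The only thing I expect to need care is index bookkeeping. I must confirm that $i-1\ge1$ in Case 1 and $i+1\le n$ in Case 2; this holds because the lemma's conditions refer to $\BWT[i\mp1]$ and $\LCE$ with those neighbours. I must also confirm that $\SA[i]+|x|\le n$, which holds because $xa$ actually occurs starting at $\SA[i]$. With both cases covered, every element of $\SRE(T)$ is captured, so $X$ is suffixient.
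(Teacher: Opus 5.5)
Your proof is correct and takes the same route as the paper, which simply notes that the Observation follows from Lemma~\ref{lem:breaks}. Your two cases spell out that implication: in the first you use run-break index $i$ with the element $\SA[i]+\LCP[i]$, and in the second you use run-break index $i+1$ with the element $\SA[(i+1)-1]+\LCP[i+1]$.
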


\begin{lemma}\label{lem:dom}
A string $u$ is a supermaximal right extension of $T$ if and only if $u \in \S$ and there is no string $v \in \S$ such that $u$ is a proper suffix of $v$.
\end{lemma}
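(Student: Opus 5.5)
The key fact is that every element of $\S$ is a right extension. Once this is established, both directions follow from \cref{lem:breaks} and the definition of supermaximality.

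\emph{Elements of $\S$ are right extensions.} Take a run-break $i$ with $\BWT[i-1]\neq\BWT[i]$ and let $\ell=\LCP[i]$. Put $x=T[\SA[i]\dd\SA[i]+\ell-1]$, which is the common prefix of the two suffixes. The two elements of $\S$ contributed by $i$ are $xc$ and $xa$, where $c=T[\SA[i-1]+\ell]$ and $a=T[\SA[i]+\ell]$. Since $\LCP[i]=\ell$ is exact, $c\neq a$. Both $xc$ and $xa$ occur in $T$, so both lie in $E_T$. The sentinel $\dol$ guarantees that the characters at offset $\ell$ are defined: the suffixes are distinct and at most one of them can end exactly at position $\ell$, and that case cannot arise because $\dol$ is unique. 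Hence $\S\subseteq E_T$.

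\emph{($\Rightarrow$)} Let $u=xa\in\SRE(T)$. By \cref{lem:breaks} there is an index $i$ such that $T[\SA[i]\dd\SA[i]+|x|]=xa$ and one of two cases holds.
In the first case, $\BWT[i-1]\neq\BWT[i]$ and $\LCP[i]=|x|$. Then $u=T[\SA[i]\dd\SA[i]+\LCP[i]]\in\S$.
In the second case, $\BWT[i+1]\neq\BWT[i]$ and $\LCP[i+1]=|x|$. Applying the definition of $\S$ at the run-break $i+1$, its first component is $T[\SA[i]\dd\SA[i]+\LCP[i+1]]=u$, so again $u\in\S$.
Now suppose some $v\in\S$ had $u$ as a proper suffix. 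Since $\S\subseteq E_T$, $v$ would be a right extension properly extending $u$ to the left, contradicting supermaximality.

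\emph{($\Leftarrow$)} Let $u\in\S$ such that no element of $\S$ has $u$ as a proper suffix. We know $u\in E_T$. Suppose $u\notin\SRE(T)$. Then some $y\in E_T$ has $u$ as a proper suffix. Among all such right extensions, pick a longest one, $y$; this exists because lengths are bounded by $n$. We claim $y\in\SRE(T)$. Indeed, any right extension having $y$ as a proper suffix would be longer than $y$ and would also have $u$ as a proper suffix, contradicting maximality. By the ($\Rightarrow$) direction, $y\in\S$, and $u$ is a proper suffix of $y$, contradicting the hypothesis on $u$.

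The main obstacle is the careful inclusion $\S\subseteq E_T$: one must check that the exactness of $\LCP$ forces distinct next characters and that the sentinel rules out boundary issues. The other delicate point is matching the second case of \cref{lem:breaks} to the first component of the $\S$ definition, shifted by one index. Everything else is a routine maximality argument.
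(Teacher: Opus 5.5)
Your proof is correct and follows essentially the paper's argument: $\S\subseteq E_T$, then \cref{lem:breaks} gives $\SRE(T)\subseteq\S$, and a maximality argument handles ($\Leftarrow$). Your choice of a longest right extension having $u$ as a proper suffix is equivalent to the paper's iterated chain $v_1,v_2,\ldots$. You also make explicit details the paper leaves implicit: the sentinel argument for $\S\subseteq E_T$, and the index shift that matches the second case of \cref{lem:breaks} to the first component of $\S$ at run-break $i+1$.
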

\begin{proof}
By the definition of $\S$, each element of $\S$ is a right extension of $T$.

$(\Rightarrow)$ If $u$ is a proper suffix of a right extension $v \in \S$, then $u$ is not supermaximal.

$(\Leftarrow)$ Assume $u \in \S$ is not a proper suffix of any string $v \in \S$. Then, 
by \cref{lem:breaks}, it is not a proper suffix of any supermaximal right extension. Now assume that $u$ is a proper suffix of a right extension $v_1$ that is not supermaximal. Since $v_1$ is not supermaximal, it is a proper suffix of a longer right extension $v_2$, and so on. We continue this process until we reach a supermaximal string $v$ that has all the previous ones---hence, $u$---as suffixes. \qed
\end{proof}

By \cref{lem:dom}, to obtain a suffixient set of the smallest cardinality, it suffices to remove the elements of $\S$ that correspond to proper suffixes of another element of $\S$. Checking this condition is equivalent to checking if an element is a proper prefix of another, when both are read backwards. 

Let $\rev{\S}$ be the set of reversals of substrings in $\S$, which we treat as substrings of $\TR$. We say that a substring $\TR[n+1-j-\ell\dd n+1-j]$ is \emph{reverse supermaximal} if $T[j\dd j+\ell]$ is supermaximal in $T$. By \cref{lem:breaks}, every reverse supermaximal substring is contained in $\rev{\S}$. Our goal is to filter out the remaining substrings from $\rev{\S}$. The next corollary follows from \cref{lem:dom}.

\begin{corollary}\label{cor:dom}
String $X \in \rev{\S}$ is reverse supermaximal if and only if there is no other substring $Y \in \rev{\S}$ that is prefixed by $X$.
\end{corollary}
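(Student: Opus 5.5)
The plan is to transport \cref{lem:dom} through string reversal. The only fact needed about reversal is that it swaps suffixes and prefixes: $u$ is a proper suffix of $v$ if and only if $\rev{u}$ is a proper prefix of $\rev{v}$.

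\textbf{Fixing the identification.} I first make precise how a string $X\in\rev{\S}$ is identified with an element of $\S$. Each $X$ is the reversal $\rev{u}$ of some $u\in\S$. Being ``reverse supermaximal'' for an occurrence $\TR[n+1-j-\ell\dd n+1-j]$ depends only on the string $T[j\dd j+\ell]=u$. So $X$ is reverse supermaximal exactly when $u=\rev{X}$ is a supermaximal right extension of $T$. I read ``other substring $Y$ prefixed by $X$'' as $Y\neq X$ with $X$ a proper prefix of $Y$. The case $Y=X$ is vacuous, since two different occurrences of the same string give the same element of $\rev{\S}$.

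\textbf{Applying \cref{lem:dom}.} With $u=\rev{X}\in\S$, \cref{lem:dom} says $u\in\SRE(T)$ if and only if no $v\in\S$ has $u$ as a proper suffix. The map $v\mapsto\rev{v}$ is a bijection from $\S$ to $\rev{\S}$. Under it, ``$u$ is a proper suffix of $v$'' becomes ``$X$ is a proper prefix of $Y=\rev{v}$''. Hence no such $v$ exists exactly when no $Y\in\rev{\S}$, $Y\neq X$, is prefixed by $X$, which is the claimed equivalence in both directions.

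\textbf{Main obstacle.} The only delicate step is the bookkeeping between strings and positions. An element of $\S$ is defined via a particular occurrence at $\SA[i-1]$ or $\SA[i]$, while the reversal is treated as a substring of $\TR$. I handle this by arguing at the level of strings throughout, noting that both supermaximality and the prefix relation are properties of strings, not occurrences. No additional combinatorial argument beyond \cref{lem:dom} is needed.
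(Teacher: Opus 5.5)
Your proposal is correct and takes the same approach as the paper: the paper simply states that the corollary follows from \cref{lem:dom}, and you make explicit that reversal turns proper suffixes into proper prefixes. Reading ``other'' as ``$X$ is a proper prefix of $Y$'' at the level of strings is the reading under which the corollary holds. It is also why \textsc{Deleting Prefixes} keeps one copy among equal substrings.
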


This leads to the problem given next. In the problem, we remove strings being proper prefixes of other strings and leave one string from many equal copies.

\begin{figure}[t]
\center
\begin{tikzpicture}[scale=0.85]
\draw[-] (0,7) to (0,9);
\draw[-] (4.5,7) to (4.5,9);
\draw[-] (5,7) to (5,9);
\node at (4.75,9.5) {$\BWT(T)$};

\node at (-0.25,7.70) {$i$};
\draw (0,8) rectangle (1.5,8.5) node[pos=.5] {$x$};
\draw (0,7.5) rectangle (1.5,8) node[pos=.5] {$x$};
\draw (1.5,8) rectangle (2,8.5) node[pos=.5] {$a$};
\draw (1.5,7.5) rectangle (2,8) node[pos=.5] {$b$};
\draw (4.5,8) rectangle (5,8.5) node[pos=.5] {$c$};
\draw (4.5,7.5) rectangle (5,8) node[pos=.5] {$d$};

\begin{scope}[xshift=1.5cm]
\draw (6,9) rectangle (11.5,8.5) node[pos=.5] {};
\draw (6.5,9) rectangle (8,8.5) node[pos=.5] {$\rev{xa}$};
\draw (9,9) rectangle (10.5,8.5) node[pos=.5] {$\rev{xb}$};
\node at (7.2,7.7) {$n-(\SA[i]+|x|)+1$};
\node at (10,7.2) {$n-(\SA[i-1]+|x|)+1$};
\node at (8.75,9.5) {$\TR$};
\draw[->] (6.55,8) to (6.55,8.4);
\draw[->] (9.05,7.5) to (9.05,8.4);
\end{scope}
\end{tikzpicture}
\caption{Processing the BWT run-break at position $i$ of the set $\S$. Substrings $T[\SA[i]\dd \SA[i] +|x|]$ and $T[\SA[i-1]\dd \SA[i-1] +|x|]$ are candidates to be supermaximal extensions of $T$. We add the pairs $(n-(\SA[i]+|x|)+1, |x|+1)$ and $(n-(\SA[i-1]+|x|)+1, |x|+1)$ to the multiset $\rev{\S}$, which represents the starting positions and length of the occurrences of those reversed candidates in $\rev{T}$. }\label{fig:main_algorithm}
\end{figure}

\paragraph{Deleting prefixes among a set of substrings.}

We define the following auxiliary problem. Among equal substrings, we retain an arbitrary one.

\newcommand{\argmax}{\mathtt{argmax}}
\defproblem{Deleting Prefixes}{A sequence $T[s_1 \dd e_1),\ldots,T[s_t \dd e_t)$ of substrings of $T$}{List in lexicographic order all different substrings $T[s_i \dd e_i)$ such that $T[s_i \dd e_i)$ is not a proper prefix of $T[s_j \dd e_j)$ for all $j \in [1,t] \setminus \{i\}$}

\begin{example}
If the substrings are $\mathtt{010},\mathtt{01},\mathtt{1010},\mathtt{010},\mathtt{1},\mathtt{101}$, the output should contain substrings $\mathtt{010},\mathtt{1010}$. In the problem, substrings are represented in $O(1)$ space as position pairs.
\end{example}

This problem can be solved using weighted ancestor queries on the suffix tree~\cite{belazzougui_et_al:LIPIcs.CPM.2021.8}. Below we show an arguably simpler algorithm. More importantly, in the next section we show how to make it work in sublinear time (for sufficiently small $t$). Notably, our algorithm does not require to sort the substrings lexicographically; it suffices to sort the corresponding suffixes of $T$.

We say that index $i$ is a (proper) prefix of (or is equivalent to) index $j$ if $T[s_i \dd e_i)$ is a (proper) prefix of $T[s_j \dd e_j)$ (or $T[s_i \dd e_i)=T[s_j \dd e_j)$, resp.).

\newcommand{\ggood}{\mathit{good}}
\newcommand{\nnext}{\mathit{next}}
\newcommand{\pprev}{\mathit{prev}}
In the solution to the problem, we store a decremental doubly-linked list of the indices in $[1,t]$. We consider the substrings in the order of non-decreasing lengths. When an index is considered, we only check if it is a prefix of any of its neighbours in the list and then remove it from the list. Technically, the list is represented as $\nnext/\pprev$ pointers; see \cref{alg:Prefixes}.

\begin{algorithm}[h]
\caption{Solution to \textsc{Deleting Prefixes}}\label{alg:Prefixes}
Reorder the substrings so that $T[s_1 \dd n] \le \cdots \le T[s_t \dd n]$\;
\For{$i\gets 1$ \KwSty{to} $t-1$}{
    $\nnext[i]\gets i+1$\;
    $\pprev[i+1]\gets i$\;}
$\nnext[t] \gets \pprev[1] \gets \infty$\;
\lFor{$i \gets 1$ \KwSty{to} $t$}{$\ggood[i] \gets$ \KwSty{true}}
\ForEach{$(e_i-s_i,i)$ in non-decreasing order}{
    \ForEach{$j \in \{\pprev[i],\nnext[i]\}$}{
        \If{$j \ne \infty$ \KwSty{and} $\lcp(T[s_i \dd e_i),T[s_j \dd e_j)) = e_i-s_i$}{
            $\ggood[i] \gets $ \KwSty{false}\;
        }
    }
    $\nnext[\pprev[i]] \gets \nnext[i]$\;
    $\pprev[\nnext[i]] \gets \pprev[i]$\;
}
\Return{$\{i \in [1,t]\,:\,\ggood[i]=\mathbf{true}\}$}\;
\end{algorithm}

\begin{lemma}\label{lem:radix_sort}
$m$ integers in $[0, n]$ can be sorted in $O(\sqrt{n}+m)$ time.
\end{lemma}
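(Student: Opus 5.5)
The plan is a two-pass radix sort with base $B=\lceil\sqrt{n+1}\,\rceil$. Each integer $x\in[0,n]$ is written as $x = hB + \ell$, where $h=\lfloor x/B\rfloor$ and $\ell = x \bmod B$. Both digits lie in $[0,B)$, because $x\le n < B^2$. Computing the two digits takes $O(1)$ time per integer using word-RAM division, so $O(m)$ time in total.

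We then run a stable counting sort twice, first on the low digit $\ell$ and then on the high digit $h$. Each pass allocates a count array of size $B$, fills it in $O(m)$ time, takes prefix sums in $O(B)$ time, and scatters the $m$ items stably into an output array in $O(m)$ time. A single pass therefore costs $O(B+m)=O(\sqrt n + m)$.

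Correctness follows from the standard LSD radix sort invariant. After the first pass the items are ordered by $\ell$. The second pass is stable, so items with equal $h$ keep their $\ell$-order. Since $x\mapsto (h,\ell)$ is an order-preserving bijection onto pairs ordered lexicographically, the final sequence is sorted by $x$.

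The only delicate point is initialization. The count array of size $B$ has to be zeroed, or allocated already zeroed, at cost $O(B)$ per pass, and this is exactly where the $\sqrt n$ term comes from. We never allocate anything of size $\Theta(n)$, which keeps the bound sublinear when $m$ is small. I expect no real obstacle beyond this observation. If needed, one could also remark that the same scheme with base $n^{1/k}$ gives $O(k(n^{1/k}+m))$, but the stated bound only requires $k=2$.
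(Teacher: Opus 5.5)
Your proposal is correct and takes essentially the paper's approach. The paper also writes each integer as a two-digit pair with base roughly $\sqrt{n}$ (it uses $\lfloor\sqrt{n}\rfloor$) and radix-sorts the pairs in $O(\sqrt{n}+m)$ time. Your base $B=\lceil\sqrt{n+1}\,\rceil$ and your explicit treatment of stability and count-array initialization tighten the details, for instance by guaranteeing that both digits lie in $[0,B)$ even for $x=n$.
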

\begin{proof}
We treat an integer $x \in [0, n)$ as a pair $(\lfloor x/\lfloor \sqrt{n} \rfloor \rfloor,x \bmod \lfloor \sqrt{n} \rfloor)$. We sort the resulting $m$ pairs using Radix Sort in $O(\sqrt{n}+m)$ time~\cite{DBLP:books/daglib/0023376}.
\qed\end{proof}

\begin{lemma}\label{lem:Prefixes}
\cref{alg:Prefixes} solves the \textsc{Deleting Prefixes} problem in $O(\min(t+\sqrt{n},t \log t)+tq)$ time provided that any $\LCE$ and $\ISA$ query on $T$ is answered in $O(q)$ time.\end{lemma}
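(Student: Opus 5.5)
We split the argument into correctness of \cref{alg:Prefixes} and its running time.

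\emph{Correctness.} The key invariant is the following. At any moment, the list contains the not-yet-processed indices, in the order of their suffixes $T[s_i\dd n]$; these are all at least as long as the current index, since indices are processed by non-decreasing length. Suppose index $i$ is processed and some $j\neq i$ has $T[s_i\dd e_i)$ as a proper prefix, or as an equal string. Every index $k$ lying between $i$ and $j$ in suffix order satisfies $\lcp(T[s_k\dd n],T[s_i\dd n])\ge \lcp(T[s_j\dd n],T[s_i\dd n])\ge e_i-s_i$. This is the standard sandwich property of lexicographically sorted strings. So every such $k$ also has $T[s_i\dd e_i)$ as a prefix.

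Now take the neighbour of $i$ in the current list on the side of $j$.
\begin{itemize}
\item If $j$ is unprocessed, $j$ itself is still in the list, so that neighbour lies between $i$ and $j$ (or is $j$). It therefore has $T[s_i\dd e_i)$ as a prefix, and $i$ is marked not good.
\item If $j$ was processed earlier, then $e_j-s_j\le e_i-s_i$, so $j$ is equivalent to $i$, not a proper extension of it.
\end{itemize}
Conversely, if a neighbour is found with $\lcp = e_i-s_i$, then $i$ is a prefix of, or equivalent to, another index. Hence, among a class of equal substrings, only the last processed one stays good. An index that is a proper prefix of some $j$ is always marked: $j$ is strictly longer, hence processed later and still present. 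The good indices are thus exactly one representative per class of maximal (non-proper-prefix) substrings.

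For the lexicographic output, note what happens among the good ones. None of them is a prefix of another, so the order of their suffixes $T[s_i\dd n]$ coincides with the order of the substrings $T[s_i\dd e_i)$. Reading them in the sorted order of line 1 therefore lists them correctly. I expect the delicate point to be the careful handling of the equal-copies case. The mismatch between suffix order and substring order is harmless, because the sandwich argument uses only prefix lcp's.

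\emph{Time.} Line 1 computes $\ISA[s_i]$ for each $i$ in $O(tq)$ time and sorts these ranks. Sorting the $t$ lengths in the main loop works the same way, since lengths lie in $[0,n]$. Either sort can use comparison sorting in $O(t\log t)$ time or \cref{lem:radix_sort} in $O(t+\sqrt n)$ time, whichever is cheaper. The loop does $O(1)$ list operations and at most two $\lcp$ checks per index. Each check asks whether $\min(\LCE(s_i,s_j),e_j-s_j)\ge e_i-s_i$, which is one LCE query costing $O(q)$. The total is $O(\min(t+\sqrt n,t\log t)+tq)$.
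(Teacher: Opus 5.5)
Your proof is correct and follows essentially the same route as the paper. The sandwich property of the suffix-sorted list guarantees that the current-list neighbour of $i$ on the side of any still-present extension or copy $j$ also has $T[s_i\dd e_i)$ as a prefix; the lexicographic output follows because pairwise non-prefix strings are ordered like their suffixes; and the time bound comes from sorting $\ISA$ ranks and lengths (radix or merge sort) plus $O(t)$ $\LCE$ queries, with your ``last processed copy survives'' being the paper's tie-break by index ($j>i$).
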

\begin{proof}
\emph{Correctness:} First we show that $\ggood[i]=\mathbf{false}$ at the conclusion of the algorithm if and only if for some $j \in [1,t] \setminus \{i\}$, (1) $i$ is a proper prefix of $j$ or (2) $j$ is equivalent to $i$ and $j>i$.

$(\Rightarrow)$ If $\ggood[i]=\mathbf{false}$, let $j$ be the neighbour of $i$ in the current list that caused this assignment. Naturally, we have $j \in [1,t] \setminus \{i\}$. If $i$ is a proper prefix of $j$, condition (1) is satisfied. If $i$ is equivalent to $j$, we have $e_i-s_i=e_j-s_j$, so $j>i$ by the order in which the list is processed. Hence, condition (2) is satisfied.

$(\Leftarrow)$ Assume that $i,j \in [1,t]$, with $i \ne j$, satisfy (1) $i$ is a proper prefix of $j$ or (2) $j$ is equivalent to $i$ and $j>i$. In either case, when index $i$ is considered in the foreach-loop, index $j$ is still present in the list. In case (1), we assume that $j>i$, as the opposite case is symmetric. Let $i_1=i,i_2,\ldots,i_{w-1},i_w=j$ be all indices in between those two in the list at this moment of the algorithm. Each of the substrings $T[s_{i_l} \dd e_{i_l})$ for $l \in [1,w]$ has length at least $e_i-s_i$, so each of them has $T[s_i \dd e_i)$ as a prefix, as otherwise $T[s_1 \dd n] \le \cdots \le T[s_t \dd n]$ would not be satisfied. Hence, $\ggood[i]=\mathbf{false}$ because of $i_2=\nnext[i]$.

Now we show that substrings are reported in lexicographic order. Assume that consecutive indices $i,i'$ such that $i<i'$ are present in the output of the algorithm. As both indices are in the output, none of the strings $u:=T[s_i \dd e_i]$, $v:=T[s_{i'} \dd e_{i'}]$ is a prefix of the other (in particular, $u \ne v$). Hence, there is a position $q \in [1,\min(|u|,|v|)]$ such that $u[1 \dd q) = v[1 \dd q)$ and $u[q] \ne v[q]$. We have $T[s_i \dd n] \le T[s_{i'} \dd n]$, so $u[q] < v[q]$, i.e., $u<v$.

\emph{Complexity:} Sorting suffixes $T[s_i \dd n]$ by the values $\ISA[s_i]$ as well as values $e_i-s_i$ takes $O(\sqrt{n}+t)$ time by \cref{lem:radix_sort} or $O(t \log t)$ time using merge sort. In addition, the algorithm performs $O(t)$ operations and $O(t)$ $\LCP$ (and $\ISA$) queries. Indeed, an $\lcp$ query on substrings of $T$ can be answered using an $\LCE$ query on the corresponding prefixes of $T$ and a comparison with the substrings' lengths.
\qed\end{proof}

\paragraph{Wrap-up.}

Let us analyze the time complexity of the algorithm. 
We compute the $\BWT$, $\SA$ and $\ISA$ for $T$ and $\ISAR$ for $\TR$ in $O(n)$ time \cite{10.1145/1217856.1217858}. We compute in $O(n)$ time data structures for answering $\LCE$ queries on $T$ and $\LCER$ queries on $\TR$ in $O(1)$ time~\cite{10.1007/10719839_9}. This allows us to compute the set $\S$, and hence the set $\rev{\S}$, in $O(n)$ time. We then use \cref{alg:Prefixes} on $\TR$ to identify reverse supermaximal substrings in $\rev{\S}$. By \cref{lem:Prefixes}, the algorithm works in $O(n)$ time.

By definition, if $\TR[n+1-(j+\ell)\dd n+1-j)$ is such a substring, then $T[j \dd j+\ell)$ is a supermaximal right extension of $T$. By taking all indices $j+\ell-1$ over all such pairs, we obtain a suffixient set of the smallest cardinality.

In the output to the \textsc{Deleting Prefixes} problem, the substrings (of $\TR$) are sorted according to lexicographic order, so they represent a co-lexicographic order of the original substrings of $T$. We leave one copy from each set of equal substrings to ensure that the size of the suffixient array is smallest. Thus we have computed a smallest suffixient array of $T$ in the desired time and space complexity, recall Def.~\ref{def:sarray}.

\paragraph{Building in compressed space.}

As a first application of our algorithm, consider the particularly interesting scenario of building the suffixient array within repetitive-aware space. Prezza \cite{Pre16} showed how to build the run-length-encoded $\BWT$ within $O(r)$ space in time $O(n\log r)$, providing access to the $\BWT$ in time $O(\log\log(n/r))$. Gagie et al.~\cite{GNP20} extend this construction to add structures that compute $\SA$, $\ISA$, and $\LCP$ in time $O(\log(n/r))$. Their construction uses $O(r\log(n/r))$ space and $O(n(\log r + \log\log (n/r)))$ time.

To apply our algorithm, we build those structures for both $T$ and $\TR$. We then compute $\S$ and $\rev{\S}$ in time $O(r\log(n/r))$, whereas the prefix deletion phase takes $O(r \log r + r\log(n/\rR))$ time. The total time is then $O(r\log n) \subseteq O(n\log r)$ and the extra working space is $O(r)$. (We improve this result later, in Corollary~\ref{cor:compressed}.)

\begin{theorem} \label{thm:compressed}
A suffixient array of a string $T \in [0, \sigma)^n$, where the $\BWT$ of $T$ has $r$ runs and that of $\TR$ has $\rR$ runs, can be computed in $O(n(\log r^*+\log\log(n/r^*)))$ time using $O(r^*\log (n/r^*))$ working space, where $r^* = \max(r,\rR)$.
\end{theorem}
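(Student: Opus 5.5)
The plan is to instantiate the linear-time algorithm of this section, replacing each full-size array by the run-length compressed structures of Prezza~\cite{Pre16} and Gagie et al.~\cite{GNP20}, built for both $T$ and $\TR$. The first step is construction. Using~\cite{Pre16,GNP20}, I build the run-length $\BWT$ of $T$ together with structures answering $\SA$, $\ISA$ and $\LCP$ queries in $O(\log(n/r))$ time. This takes $O(n(\log r+\log\log(n/r)))$ time and $O(r\log(n/r))$ space. I do the same for $\TR$, with $\rR$ in place of $r$. Since $x\mapsto x\log(n/x)$ is nondecreasing for $x\le n/e$, and trivially bounded by $O(n)$ otherwise, both constructions fit within $O(n(\log r^*+\log\log(n/r^*)))$ time and $O(r^*\log(n/r^*))$ space. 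For $\LCE$ on $\TR$ between arbitrary positions $p,q$, I would use $\ISAR$ to locate the two suffixes in the suffix array of $\TR$. I then need a range minimum over $\LCPR$; alternatively, I can rely on the $\LCE$ support that the $r$-index machinery provides in $O(\log(n/\rR))$ time via its $\Phi$/LCP-sampling structures. I will assume such an $O(\log(n/\rR))$-time $\LCE$ query is available, and this is the step to double-check.

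The second step is to compute $\S$ without touching all $n$ positions. The run-length $\BWT$ directly lists the $r$ run-breaks $i$, that is, the positions with $\BWT[i-1]\ne\BWT[i]$. For each run-break I query $\SA[i-1]$, $\SA[i]$ and $\LCP[i]$ in $O(\log(n/r))$ time. This produces the at most $2r$ pairs of $\rev{\S}$, each as a starting position in $\TR$ and a length, in total time $O(r\log(n/r))$, with $O(r)$ extra space. Correctness of this step is exactly Definition~\ref{def:S} combined with Corollary~\ref{cor:dom}.

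The third step is to run Algorithm~\ref{alg:Prefixes} on these $t\le 2r$ substrings of $\TR$. I sort them by $\ISAR$ and by length with merge sort in $O(r\log r)$ time and $O(r)$ space, which avoids the $O(\sqrt n)$-space radix sort. By Lemma~\ref{lem:Prefixes}, the remaining work is $O(r)$ queries to $\ISAR$ and $\LCER$, each costing $O(\log(n/\rR))$. The total is $O(r\log r + r\log(n/\rR))$, which is $O(r\log n)\subseteq O(n\log r^*)$ and hence dominated by the construction time. The output is already in co-lexicographic order of $T$'s prefixes, so mapping each surviving pair back to the index $j+\ell-1$ yields a smallest suffixient array, exactly as argued in the wrap-up.

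The main obstacle I expect is the $\LCER$ query in compressed space. If no direct $\LCE$ structure is available, I would obtain it from $\ISAR$ plus a range-minimum query over the run-length–sampled $\LCPR$ of~\cite{GNP20}. I would need to check that its cost stays within $O(\log(n/\rR))$, or at worst polylogarithmic, which is still absorbed by the $O(n\log r^*)$ term since $r\le n$.
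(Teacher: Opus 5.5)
Your proposal is correct and follows essentially the same route as the paper. You build Prezza's run-length $\BWT$ and Gagie et al.'s $\SA$/$\ISA$/$\LCP$ structures for both $T$ and $\TR$, extract the at most $2r$ elements of $\S$ from the run breaks in $O(r\log(n/r))$ time, and run Algorithm~\ref{alg:Prefixes} with merge sort in $O(r\log r + r\log(n/\rR))$ time, which the construction cost dominates. The $O(\log(n/\rR))$-time $\LCER$ query you flag as needing a check is one the paper also assumes without comment; it is supplied by Gagie et al.'s compressed suffix tree, via an $\ISAR$ lookup followed by a range-minimum query over $\LCPR$.
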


\section{An $O(p(n,\sigma) {+} r\log^\epsilon n)$ time algorithm for computing $\chi$}

Let $p(n,\sigma) = \frac{n\log \sigma}{\sqrt{\log n}}$ and $p'(n,\sigma) = \frac{n \log \sigma}{\log n}$.
In this section, we present an algorithm that, given a packed string $T$, computes a smallest suffixient array in $O(p(n,\sigma) + r\log^\epsilon n)$ time for any desired constant $\epsilon > 0$, and uses $O(p'(n,\sigma)+r)$ working space. We use the algorithm from the previous section but replace the specific data structures with their sublinear-time counterparts.

\begin{theorem}\label{thm:main}
A smallest suffixient array of a packed string $T \in [0, \sigma)^n$ can be computed in $O(p(n,\sigma) + r\log^\epsilon n)$ time for any desired constant $\epsilon > 0$, using $O(p'(n,\sigma)+r)$ working space. As a by-product, the algorithm computes $\SRE(T)$.
\end{theorem}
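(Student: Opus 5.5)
We follow the linear-time algorithm of the previous section step by step, replacing each full-size structure with its packed, sublinear-time counterpart, so that all work beyond the $O(p(n,\sigma))$ preprocessing is proportional to $r$.

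\emph{Preprocessing.} From the packed text $T$ we compute the packed text $\TR$ in $O(p'(n,\sigma))$ time; reversing word by word uses table lookup on chunks of $\frac12\log_\sigma n$ letters. We then build:
\begin{itemize}
\item the BWT of $T$ in $O(p(n,\sigma))$ time, and from it its run-length encoding with $r$ runs, by scanning the packed BWT with lookup tables over chunks of $\frac12\log_\sigma n$ letters, so that each chunk is handled in $O(1)$ time plus $O(1)$ per run boundary, in $O(p'(n,\sigma)+r)$ time in total;
\item the $\SA$/$\ISA$ structure of the third theorem of the computational-model section for $T$ and also for $\TR$, each in $O(p(n,\sigma))$ time and $O(p'(n,\sigma))$ space, answering queries in $O(\log^\epsilon n)$ time;
\item the LCE structures of the first theorem for $T$ and $\TR$, in $O(p'(n,\sigma))$ time, answering queries in $O(1)$ time.
\end{itemize}
All of these fit in $O(p'(n,\sigma))$ working space.

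\emph{Computing $\S$ and $\rev{\S}$.} For each of the at most $r$ run-breaks $i$ (so $\BWT[i-1]\neq\BWT[i]$), we query $\SA[i-1]$ and $\SA[i]$ in $O(\log^\epsilon n)$ time. We then compute $\LCP[i]=\LCE(\SA[i-1],\SA[i])$ in $O(1)$ time. This yields the two elements of $\S$ given by Definition~\ref{def:S}, which we convert to the pairs (start in $\TR$, length) as in \cref{fig:main_algorithm}. Thus $|\rev{\S}|=t\le 2r$, computed in $O(r\log^\epsilon n)$ time and $O(r)$ space. By the Observation following Definition~\ref{def:S}, together with \cref{lem:dom} and \cref{cor:dom}, $\SRE(T)$ consists exactly of the reversals of the output of \textsc{Deleting Prefixes} run on $\rev{\S}$ over $\TR$.

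\emph{Deleting prefixes.} We run \cref{alg:Prefixes} on $\TR$ with $t\le 2r$, using $\ISAR$ queries in $O(\log^\epsilon n)$ time and $\LCER$ queries in $O(1)$ time, so $q=O(\log^\epsilon n)$. By \cref{lem:Prefixes}, with the $O(t+\sqrt n)$ radix-sort bound of \cref{lem:radix_sort}, this takes $O(\sqrt n + r\log^\epsilon n)$ time and $O(r)$ extra space. We need $\sqrt n = O(p(n,\sigma))$: this holds because $\sqrt n \le n/\sqrt{\log n}$ and $\log\sigma\ge1$. The survivors arrive in lexicographic order on $\TR$, that is, in co-lexicographic order on $T$. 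Mapping each surviving pair to its end position $j+\ell-1$ in $T$ gives, as in the wrap-up of the previous section, a smallest suffixient array. The surviving strings themselves form $\SRE(T)$, since $\chi=|\SRE(T)|$ and duplicates are removed.

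\emph{Main obstacle.} The delicate part is making sure nothing costs $\Theta(n)$: iterating over run-breaks without scanning the BWT letter by letter, and computing LCP values only at those positions rather than the whole $\LCP$ array. This is handled by the packed run-length encoding step above and by using LCE queries in place of $\LCP$. Summing all steps gives $O(p(n,\sigma)+r\log^\epsilon n)$ time and $O(p'(n,\sigma)+r)$ space.
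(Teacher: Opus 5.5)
Your proposal is correct and follows essentially the same route as the paper. Both build the packed BWT with table-driven detection of run-breaks, obtain the packed $\TR$ by table-driven reversal, use the Kempa--Kociumaka $\SA$/$\ISA$ and LCE structures with $O(r)$ queries at the run-breaks to build $\rev{\S}$, and finish with \cref{alg:Prefixes} using radix sorting, for $O(p(n,\sigma)+r\log^\epsilon n)$ time. The only differences are bookkeeping: you build $\SA$/$\ISA$ structures for both $T$ and $\TR$, and you charge the $\ISAR$ queries to \cref{alg:Prefixes} rather than to the construction of $\rev{\S}$; neither affects the bounds.
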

\begin{proof}
We compute the (packed) $\BWT$ of $T$ in $O(p(n,\sigma))$ time and $O(p'(n,\sigma))$ space~\cite{sublinear_BWT_construction_fullversion}.
Since the $\BWT$ is packed in $O(p'(n,\sigma))$ space, we can identify all run breaks, in each machine word separately, in $O(p'(n,\sigma)+r)$ time: We precompute in time $O(\sqrt{n}\log n)$ a table that, for every sequence of packed symbols fitting in $\lfloor \frac{1}{2}\log n\rfloor$ bits, records the run breaks inside the sequence.

We compute the packed representation of $\TR$. To this end, we reverse the list of words representing $T$ and also need to reverse the symbols within each word. To this end, we precompute in time $O(\sqrt{n} \log n)$ a table that stores the reverse of every sequence of packed symbols fitting in $\lfloor \frac12 \log n \rfloor$ bits.
Now after $O(p(n,\sigma))$ time and $O(p'(n,\sigma))$ space preprocessing, we can compute $\SA[i]$ and $\ISAR[i]$ for any $i$ in $O(\log^\epsilon n)$ time~\cite{sublinear_suffix_array_construction}, and after $O(p'(n,\sigma))$ time preprocessing, the $\LCE$ of any two positions in $T$ and in $\TR$ can be computed in $O(1)$ time~\cite{sublinear_BWT_construction_fullversion}.

With these data structures, the set $\S$, and hence the set $\rev{\S}$, can be computed in $O(p(n,\sigma)+r\log^\epsilon n)$ time and $O(p'(n,\sigma)+r)$ space (via $2r$ $\SA$, $\ISAR$, and $\LCE$ queries on the $\BWT$ run limits). This allows creating the lists so that \cref{alg:Prefixes} runs  in $O(r+\sqrt{n})$ time, by \cref{lem:Prefixes}. See \cref{alg:sublinear_from_text}.
\qed
\end{proof}

\begin{algorithm}[t]
\caption{Computing a smallest suffixient array in sublinear time}\label{alg:sublinear_from_text}
\SetKwComment{Comment}{/* }{ */}
\KwData{A packed representation $L_T$ of a text $T$}
\KwResult{A smallest suffixient array $A$ for $T$.}
$L_{\TR} \gets \mathtt{reverse}(L_{T})$\;
$\LCER \gets \LCER(L_{\TR})$\;
\For{each run-break position $i \in [\rho_1, \rho_{r-1}]$}{
    $\rev{\S} \gets \rev{\S} \cup \{(n+1-(\SA[i-1]+\LCP[i]), \LCP[i]+1)\}$\;
    $\rev{\S} \gets \rev{\S} \cup \{(n+1-(\SA[i]+\LCP[i]), \LCP[i]+1)\}$\;
}
$A\gets \mathtt{DeletingPrefixes}(L_{\TR},\LCER,\rev{\S})$\;
\Return{$A$}
\end{algorithm}

\paragraph{An always-sublinear-time approximation.}

When $\log\sigma=o(\sqrt{\log n})$, Theorem~\ref{thm:main} yields $o(n)$ construction time if there is constant $\epsilon>0$ such that $r = o(n/\log^\epsilon n)$. We can ensure $o(n)$ time for any $r$ by building an approximation, for any fixed $\epsilon$: We first build $\BWT$ in $o(n)$ time, which yields $r$. If $r \le n/\log^{\epsilon/2} n$, we can just apply Theorem~\ref{thm:main}. Otherwise, we return instead the suffixient set defined by the endpoints of $\S$ (Def.~\ref{def:S}), which is of size $2r$ and available from $\BWT$, $\SA$, and $\LCP$, all of which are built in $o(n)$ time. This size is an $O(\log^{1+\epsilon} n)$-approximation to the optimal $\chi$, because $2r = O(n)$ and $\chi \ge \delta = \Omega(r /(\log \delta \log(n/\delta)))$ \cite{kempa2020resolution}, where $\delta$ is the normalized substring complexity \cite{RRRS13}. When $r = \Omega(n/\log^{\epsilon/2} n)$, since $\delta = \Omega(r/\log^2 n)$ \cite{kempa2020resolution}, it holds $n/\delta = O((n/r)\log^2 n) = O(\textrm{polylog}\,n)$. Thus $\chi = \Omega(r/(\log\delta\log(\textrm{polylog}\,n))) = \Omega(n/(\log^{\epsilon/2} n \log n \log\log n)) = \Omega(n/\log^{1+\epsilon} n)$.

Further, if we switch to just returning the endpoints of Def.~\ref{def:S} when we detect that $r > (n\log\sigma/\sqrt{\log n})/\log^{\epsilon/2} n$, then we ensure total time $O(p(n,\sigma))$, at the price of worsening the approximation to $O(\log^{3/2+\epsilon} n)$.

\begin{corollary}\label{cor:approx}
A suffixient array of a string $T \in [0, \sigma)^n$, of size $O(\chi \log^{1+\epsilon} n)$ for any desired constant $\epsilon>0$, can be computed in $o(n)$ time using $O(p'(n,\sigma))$ working space plus the output size. The suffixient array is of size $O(\chi)$ whenever $r = O(n/\log^\epsilon n)$. Further, we can compute in $O(p(n,\sigma))$ time a suffixient array of size $O(\chi \log^{3/2+\epsilon} n)$, which is of size $\chi$ when $r = O(n\log\sigma/\log^{1/2+\epsilon} n)$.
\end{corollary}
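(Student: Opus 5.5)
The plan is a case split on $r$ after first building the $\BWT$, as in the discussion preceding the statement. First, compute the packed $\BWT$ of $T$ in $O(p(n,\sigma))$ time and $O(p'(n,\sigma))$ space, and count its run breaks word by word with the lookup table used in the proof of Theorem~\ref{thm:main}. This gives $r$ in $O(p(n,\sigma)+r)=O(n)$ time; I would verify this is $o(n)$ by counting breaks via a table that returns the number of breaks per half-word, so the count costs $O(p'(n,\sigma))$ regardless of $r$. I then compare $r$ against a threshold $\tau$.

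\textbf{Case $r\le\tau$.} Run Theorem~\ref{thm:main}, which yields a smallest suffixient array (size exactly $\chi$) in $O(p(n,\sigma)+r\log^{\epsilon'} n)$ time for a suitably small constant $\epsilon'$. For the first claim take $\tau=n/\log^{\epsilon/2}n$ and $\epsilon'=\epsilon/4$, giving $o(n)$ time whenever $\log\sigma=o(\sqrt{\log n})$. For the second claim take $\tau=p(n,\sigma)/\log^{\epsilon/2}n$, giving $O(p(n,\sigma))$ time. The space is $O(p'(n,\sigma)+r)$, i.e.\ $O(p'(n,\sigma))$ plus the output size, since $\chi\le 2r$ and the output has $O(r)$ entries.

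\textbf{Case $r>\tau$.} Output the suffixient set of the Observation following Def.~\ref{def:S}, namely $\{\SA[i-1]+\LCP[i],\SA[i]+\LCP[i] : \BWT[i-1]\neq\BWT[i]\}$. It has at most $2r$ elements, each computable with $O(1)$ $\SA$ and $\LCE$ queries, for $O(r\log^{\epsilon'}n)$ time. To turn the set into an array, sort the positions co-lexicographically via $\ISAR$ queries and Lemma~\ref{lem:radix_sort} in $O(\sqrt n+r)$ time after the $\ISAR$ evaluations.

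The main obstacle is keeping this case within the time budget, since $r$ may be $\Theta(n)$ and $r\log^{\epsilon'}n$ queries are then superlinear. The fix I would try first is to avoid per-position $\SA$ queries: $\SA$ itself can be produced in $O(n)$ time, so the cost must be charged differently. When $r>\tau$, $\chi$ is large and the approximation bound only needs the output size $2r=O(n)$, so I would instead output a trivially suffixient set such as $[1,n]$, sorted, i.e.\ the full suffix array of $\TR$. That output has size $n$, and the output size is excluded from the space bound. Its co-lex order is the $\SA$ of $\TR$ read off in order, which costs $O(n)$ time. This is linear, not $o(n)$, so to meet the stated time I would accept an implicit output representation if allowed, or otherwise check whether a compact encoding of $2r$ positions can be read off from the packed $\BWT$ in sublinear time.

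For the approximation ratio in this case, combine $2r=O(n)$ with $\chi\ge\delta$ and $\delta=\Omega(r/(\log\delta\log(n/\delta)))$ together with $\delta=\Omega(r/\log^2n)$ from \cite{kempa2020resolution}. These give $n/\delta=O(\mathrm{polylog}\,n)$ and hence $\chi=\Omega(n/\log^{1+\epsilon}n)$ when $\tau=n/\log^{\epsilon/2}n$. Likewise $\chi=\Omega(n/\log^{3/2+\epsilon}n)$ when $\tau=p(n,\sigma)/\log^{\epsilon/2}n$, using $\log\sigma\ge 1$. In both cases the output of size $O(n)$ is within the claimed factor of $\chi$.
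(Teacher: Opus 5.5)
\textbf{Review of the proposal.} Your case split, the thresholds $\tau$, the use of Theorem~\ref{thm:main} below $\tau$, and the $\delta$-based approximation analysis match the paper.

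\textbf{The gap.} The branch $r>\tau$ does not meet the claimed time, and the proposal stops there. Neither of your options works: extracting the $2r$ endpoints of $\S$ costs $\Theta(r\log^{\epsilon'} n)$ in $\SA$ queries, and writing out all of $\SAR$ costs $\Theta(n)$. Neither is $o(n)$ or $O(p(n,\sigma))$.

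\textbf{The paper's proof has the same gap.} It takes exactly your first option: it returns the $2r$ endpoints of $\S$, described as ``available from $\BWT$, $\SA$, and $\LCP$, all of which are built in $o(n)$ time.'' Your objection applies to it. Building those structures is sublinear, but extracting the endpoints needs $2r$ $\SA$ queries, and ordering them co-lexicographically into an array needs $2r$ $\ISAR$ queries. That is $\Theta(r\log^{\epsilon'}n)$ time, the same cost as running Theorem~\ref{thm:main} itself, so this fallback never saves time. For $r=\Theta(n)$ it is even superlinear, so your $O(n)$ fallback is actually the better of the two.

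\textbf{The gap cannot be closed while the output is an explicit array.} Let $T=D\dol$, where $D$ is a binary de Bruijn sequence of order $k$, so $n=2^k+k$ and $\sigma=3$.
\begin{itemize}
\item Each $y\in\{0,1\}^{k-1}$ occurs followed by $0$ and by $1$, so every $ya$ is a right extension.
\item Each binary string of length at least $k$ occurs at most once in $T$, so it is not right-maximal.
\item Hence no right extension has $ya$ as a proper suffix, all $2^k$ strings $ya$ are supermaximal, and $\chi\ge n-k$.
\end{itemize}
Every suffixient array of $T$ therefore holds $\Theta(n)$ positions. It cannot be written in $o(n)$ time, let alone $O(p(n,\sigma))$, even though $\sigma=3$. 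So the unconditional time claims of the corollary cannot be proved. What your argument, and the paper's, actually establishes is the size guarantees, with the stated time when $r\le\tau$ and $O(n)$ time otherwise (via your $\SAR$-based output).

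\textbf{A smaller slip.} In the branch $r\le\tau$, the inequality $\chi\le 2r$ bounds the output by $O(r)$. It does not bound the $O(r)$ candidate buffer of Theorem~\ref{thm:main} by the output size. So it does not yield ``$O(p'(n,\sigma))$ working space plus the output size''; the paper leaves this point unjustified as well.
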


\paragraph{Replacing the $r$ factor by $\min(r,\rR)$.}

Cenzato et al. \cite[Algorithm 6]{cenzato2025suffixientarraysnewefficient} shows how to construct a smallest suffixient array for $T$ in linear time, starting from the arrays $\BWTR$, $\SAR$, and $\LCPR$, of the reversed text. We adapt their algorithm using the data structures of Kempa and Kociumaka \cite{sublinear_BWT_construction_fullversion}, obtaining the following result (see Appendix~\ref{sec:adaptation_cenzato}).

\begin{theorem}\label{thm:main_reverse}
A smallest suffixient array of a string $T \in [0, \sigma)^n$ can be computed in $O(p(n,\sigma) + \rR\log^\epsilon n)$ time for any constant $\epsilon > 0$, using $O(p'(n,\sigma)+\rR)$ working space. 
\end{theorem}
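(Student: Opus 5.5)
We prove the statement by the same candidate-then-filter scheme as \cref{thm:main}, but we extract the candidates from the run breaks of $\BWTR$ instead of $\BWT$. Then the number of queries is $O(\rR)$ rather than $O(r)$, and the filtering step is exactly \textsc{Deleting Prefixes} on $\TR$. This is, in spirit, what Algorithm 6 of Cenzato et al.\ does starting from $\BWTR,\SAR,\LCPR$. Our job is to re-derive its correctness in our terminology and to implement each step with the packed-text data structures.

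\textbf{Step 1 (reversed analogue of \cref{lem:breaks}).} A right extension $xa$ of $T$, with a witness $xb$ for $b\neq a$, corresponds in $\TR$ to the string $a\rev{x}$, where $\rev{x}$ is left-maximal. Let $I$ be the $\SAR$-interval of suffixes of $\TR$ prefixed by $\rev{x}$. Then $\BWTR$ restricted to $I$ contains both $a$ and $b$, so some adjacent pair $i-1,i\in I$ satisfies $\BWTR[i-1]\neq\BWTR[i]$ with one of the two letters equal to $a$. Both suffixes start with $\rev{x}$, so $\LCPR[i]\ge|x|$. Now suppose $xa$ is supermaximal and $\LCPR[i]>|x|$. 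Let $c$ be the next common character. Then $\rev{x}c$ is left-maximal with $a$ as a left letter, so $a\rev{x}c$ is a reversed right extension that has $a\rev{x}$ as a proper prefix. This means $xa$ is a proper suffix of the right extension $\rev{c}\ldots$, a contradiction. Hence $\LCPR[i]=|x|$.

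We therefore define the candidate multiset $\rev{\S}'$ as follows: for every run break $i$ of $\BWTR$, it contains the two substrings of $\TR$ of length $\LCPR[i]+1$ that start at $\SAR[i-1]-1$ and at $\SAR[i]-1$ (positions taken cyclically, as in the definition of the BWT). Every candidate is the reversal of a right extension of $T$, and by the argument above every reversed supermaximal right extension lies in $\rev{\S}'$. The proof of \cref{lem:dom} then carries over verbatim and gives the analogue of \cref{cor:dom}: a candidate is reverse supermaximal if and only if it is not a proper prefix of another candidate.

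\textbf{Step 2 (filtering and output).} We run \cref{alg:Prefixes} on $\TR$ with these $2(\rR-1)$ substrings. By \cref{lem:Prefixes}, it returns one copy of each reverse supermaximal string, sorted lexicographically in $\TR$, i.e., co-lexicographically in $T$. Mapping each output substring back to its right end in $T$ yields the smallest suffixient array, exactly as in the wrap-up of the linear-time algorithm.

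\textbf{Step 3 (complexity).} We build the packed $\TR$ with the word-reversal table, as in the proof of \cref{thm:main}. Then we build the packed $\BWTR$ in $O(p(n,\sigma))$ time and $O(p'(n,\sigma))$ space, and find its run breaks in $O(p'(n,\sigma)+\rR)$ time with the lookup table. Next we build the structures answering $\SAR$ and $\ISAR$ queries in $O(\log^\epsilon n)$ time and $\LCER$ queries in $O(1)$ time, all in $O(p(n,\sigma))$ time. Each value $\LCPR[i]=\LCER(\SAR[i-1],\SAR[i])$ costs one $\LCER$ query. Computing the $O(\rR)$ candidates therefore takes $O(\rR\log^\epsilon n)$ time. \cref{alg:Prefixes} then takes $O(\rR+\sqrt n+\rR\log^\epsilon n)$ time, and $\sqrt n=O(p(n,\sigma))$. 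The working space is $O(p'(n,\sigma)+\rR)$.

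The main obstacle is Step 1, namely showing that the boundary pair has $\LCPR$ exactly $|x|$. The delicate point is the cyclic/\$ convention when $\SAR[i]=1$, where the candidate would start at position $0$. We handle it as in the definition of $\BWTR$: the preceding character of $\TR[1\dd n]$ is taken to be $\TR[n]$, and such a candidate is either discarded or shown never to be supermaximal because \$ is unique.
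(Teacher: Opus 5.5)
Your proof is correct, but it takes a different route from the paper. The paper treats Cenzato et al.'s Algorithm~6 as a correct black box. That algorithm is a single scan of $\BWTR$ that keeps one pending candidate $R[c]$ per letter, the counters $\LFR$, and the $\mathit{eval}$ test. The paper only argues that the scan may jump from one run break to the next: the minimum of $\LCPR$ over a run is replaced by one query $\LCER(\SAR[b_{i-1}],\SAR[b_i])$, and $\LFR$ is advanced by run lengths. The resulting smallest suffixient set is then sorted co-lexicographically with $\ISAR$ queries and \cref{lem:radix_sort}.

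You instead transplant the paper's own Section~3 pipeline to $\TR$. Your reversed analogue of \cref{lem:breaks} is sound: inside the $\SAR$-interval of $\rev{x}$ there is a run break of $\BWTR$ with $a$ on one side, and supermaximality forces $\LCPR[i]=|x|$, since otherwise $cxa$ would be a longer right extension. Together with the verbatim analogue of \cref{lem:dom}, this lets you reuse \cref{alg:Prefixes} and \cref{lem:Prefixes} unchanged, which also gives you the co-lexicographic order for free.

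Your route buys a self-contained correctness proof, with no $O(\sigma)$-size state and no reliance on the internals of Cenzato et al. The paper's route decides on the fly with constant state per letter, without storing all $2(\rR-1)$ candidates or running a separate prefix-deletion pass. Both reach $O(p(n,\sigma)+\rR\log^\epsilon n)$ time and $O(p'(n,\sigma)+\rR)$ space.

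One correction to your last paragraph: the cyclic candidate must be kept, and it can be supermaximal. Take $T=\mathtt{a}\dol$, so $\TR=\dol\mathtt{a}$ and $\SRE(T)=\{\mathtt{a},\dol\}$. The only candidate spelling $\mathtt{a}$ comes from the cyclic letter $\BWTR[1]=\TR[2]$, so discarding it, or claiming it is never supermaximal, would break the construction. No special argument is needed. Since $\TR[1]=\dol$ is unique, $\LCPR$ is $0$ at any break involving $\TR[1\dd n]$. Hence the cyclic candidate is just the length-one substring $\TR[n\dd n]=T[1]$, a genuine occurrence of a right extension, which maps back to position $1$ of $T$. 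Your definition with positions taken cyclically already handles this correctly.
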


By combining \cref{thm:main} and \cref{thm:main_reverse}, and because $r$ and $\rR$ can be computed in $O(p(n,\sigma))$ time, we obtain the following.

\begin{corollary}
A smallest suffixient array of a string $T \in [0, \sigma)^n$ can be computed in $O(p(n,\sigma) + \min(r,\rR)\log^\epsilon n)$ time for any desired constant $\epsilon > 0$, using $O(p'(n,\sigma)+\min(r,\rR))$ working space. 
\end{corollary}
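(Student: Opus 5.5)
The plan is to compute both run counts first and then run whichever of the two algorithms is cheaper, so that the bound of \cref{thm:main} or of \cref{thm:main_reverse} applies with $\min(r,\rR)$ in place of $r$ or $\rR$.

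\emph{Step 1: compute $r$ and $\rR$.} We build the packed representation of $\TR$ exactly as in the proof of \cref{thm:main}, by reversing the list of words and reversing the symbols inside each word with a precomputed table. This takes $O(p'(n,\sigma)+\sqrt{n}\log n)$ time. We then construct the packed $\BWT$ of $T$ and the packed $\BWTR$ of $\TR$. By the sublinear BWT construction theorem, each takes $O(p(n,\sigma))$ time and $O(p'(n,\sigma))$ space.

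\emph{Step 2: count runs without paying $r$ or $\rR$.} I expect this to be the main (small) obstacle. The proof of \cref{thm:main} enumerates run breaks in $O(p'(n,\sigma)+r)$ time. That is not acceptable here when the larger of $r,\rR$ exceeds $p(n,\sigma)$, since we must not spend time proportional to the larger count. I would instead precompute a table indexed by chunks of $\lfloor\frac12\log n\rfloor$ bits of packed symbols. Each entry stores the number of run breaks inside the chunk. Summing the table values over all chunks, and adding one check per chunk boundary, gives $r$ and $\rR$ in $O(p'(n,\sigma)+\sqrt{n}\log n)$ time. This is within $O(p(n,\sigma))$, assuming $n\log\sigma/\sqrt{\log n}$ dominates $\sqrt{n}\log n$, which holds for any nontrivial alphabet.

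\emph{Step 3: run the cheaper algorithm.} We then discard the $\BWT$ of the reversed text if not needed, compare $r$ with $\rR$, and branch. If $r\le\rR$, we invoke \cref{thm:main}, which yields time $O(p(n,\sigma)+r\log^\epsilon n)$ and space $O(p'(n,\sigma)+r)$. Otherwise we invoke \cref{thm:main_reverse}, which yields $O(p(n,\sigma)+\rR\log^\epsilon n)$ time and $O(p'(n,\sigma)+\rR)$ space. Both theorems output a smallest suffixient array of $T$, so correctness is immediate. Adding the $O(p(n,\sigma))$ preprocessing cost of Steps 1 and 2 to either branch gives the claimed time and space bounds with $\min(r,\rR)$.
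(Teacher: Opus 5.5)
Your proposal is correct and matches the paper's argument: compute $r$ and $\rR$ in $O(p(n,\sigma))$ time, then invoke whichever of \cref{thm:main} or \cref{thm:main_reverse} has the smaller run count. Your Step~2 supplies the detail behind the paper's one-line claim that both run counts fit in $O(p(n,\sigma))$ time. Counting run breaks per packed chunk with a lookup table is exactly what avoids paying time proportional to the larger of $r$ and $\rR$.
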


Finally, repeating the same steps of Theorem~\ref{thm:compressed} over this algorithm, which uses only the structures on $\TR$, we have the following improved result.

\begin{corollary} \label{cor:compressed}
A suffixient array of a string $T \in [0, \sigma)^n$, where the $\BWT$ of $\TR$ has $\rR$ runs, can be computed in $O(n(\log \rR+\log\log(n/\rR)))$ time using $O(\rR\log (n/\rR))$ working space.
\end{corollary}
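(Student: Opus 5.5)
We prove Corollary~\ref{cor:compressed} by running the algorithm behind \cref{thm:main_reverse}, that is, the adaptation of Cenzato et al.'s Algorithm~6 (Appendix~\ref{sec:adaptation_cenzato}), on top of the compressed structures of $\TR$ alone. This mirrors what we did for \cref{thm:compressed}. That algorithm only needs sequential or random access to $\BWTR$ to find its run breaks, $\SAR$ and $\LCPR$ values at (or adjacent to) the $O(\rR)$ run boundaries, and a final sorting and filtering step over $O(\rR)$ candidates. None of these uses structures built on $T$, so we never pay for $r$.

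\textbf{Step 1: compressed structures for $\TR$.} We obtain $\TR$ by reading $T$ backwards. Using Prezza's construction \cite{Pre16} on $\TR$, we build the run-length encoded $\BWTR$ in $O(\rR)$ space and $O(n\log\rR)$ time. We then extend it, as in Gagie et al.~\cite{GNP20}, with structures answering $\SAR$, $\ISAR$ and $\LCPR$ queries in $O(\log(n/\rR))$ time. This takes $O(\rR\log(n/\rR))$ space and $O(n(\log\rR+\log\log(n/\rR)))$ time, which already dominates the claimed bound.

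\textbf{Step 2: candidate generation.} We scan the $\rR$ runs of $\BWTR$ and, at each run break, issue $O(1)$ $\SAR$ and $\LCPR$ queries. This produces the $O(\rR)$ candidate pairs used by Algorithm~6 and costs $O(\rR\log(n/\rR))$ time.

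\textbf{Step 3: filtering and co-lexicographic ordering.} We perform the filtering step, deleting candidates that are dominated (one a suffix of another in $T$) and ordering the survivors co-lexicographically. This is the step I expect to be the main obstacle. In \cref{thm:compressed} the analogous step used \cref{alg:Prefixes} on $\TR$, which needs $\ISAR$ and $\LCER$ queries; there we had $\LCER$ via structures on $T$, whereas here only $\TR$-structures are allowed.

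The fix is that both needed queries are available from $\TR$ alone. Co-lexicographic order on prefixes of $T$ is lexicographic order on suffixes of $\TR$, so the sort key is an $\ISAR$ value. An $\LCER$ query can be computed as a range minimum over $\LCPR$ between two $\ISAR$ ranks. For only $O(\rR)$ candidates we can avoid a general RMQ structure: after sorting by $\ISAR$ we only ever compare neighbours in the list of \cref{alg:Prefixes}. These range minima can be maintained while deleting list elements, by storing on each list link the minimum $\LCPR$ over its gap and taking the minimum when two links merge. The initial gaps between consecutive sorted candidates are obtained by one sequential pass over $\LCPR$, which Gagie et al.'s structures support in $O(n)$ total extra time within the same space. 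If Algorithm~6 instead works directly on $\BWTR$ run boundaries and needs no $\LCE$ beyond $\LCPR$ at those boundaries, this step is even simpler.

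\textbf{Totals.} Sorting costs $O(\rR\log\rR)$ and the queries cost $O(\rR\log(n/\rR))$, both within $O(n(\log\rR+\log\log(n/\rR)))$ time. The extra space beyond the Step~1 structures is $O(\rR)$, so the working space is $O(\rR\log(n/\rR))$ as claimed.
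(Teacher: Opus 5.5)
Your proposal is correct and is essentially the paper's argument: run the $\rR$-only algorithm behind \cref{thm:main_reverse} on Prezza's and Gagie et al.'s structures built for $\TR$ alone, paying their construction cost plus $O(\rR)$ queries of $O(\log(n/\rR))$ time each and an $O(\rR\log\rR)$ co-lexicographic sort. The obstacle you anticipate in Step~3 does not arise, because Algorithm~\ref{alg:sublinear_from_reversed_text} has no separate dominance-filtering phase (the per-character array $R$ and $eval$ resolve dominance online) and its only $\LCE$-type query is $\LCER(\SAR[b_{i-1}],\SAR[b_i])$ between consecutive run breaks, which the $\TR$-side structures answer directly (they already supplied $\LCER$ in \cref{thm:compressed}, not structures on $T$); so the link-maintained range minima and the sequential pass over $\LCPR$ are unnecessary.
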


\section{Computing $O(\chi(T))$-Sized String Representation}

Shibata and Bannai~\cite{chi_reachable} showed recently that a string $T$ can be represented in $O(\chi(T))$ space. They presented (in Thm.~15) an $O(n)$-time algorithm computing this representation (if $T$ is over a linear-time-sortable alphabet). We show that an $O(\chi(T))$-sized representation of $T$ can be computed in sublinear time (for small enough $\sigma$ and $r$).

The next lemma follows from efficient construction of wavelet trees~\cite{DBLP:conf/soda/BabenkoGKS15}.

\begin{lemma}\label{lem:Ch}
For a packed text $T \in \Sigma^n$, we can compute for each letter $c \in \Sigma$ a position $i$ such that $T[i]=c$ or state that no such position exists in $O(p(n,\sigma)+\sigma \log \sigma)$ time and $O(p'(n,\sigma)+\sigma)$ space. 
\end{lemma}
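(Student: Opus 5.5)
We build a wavelet tree of $T$ using the construction of Babenko et al.~\cite{DBLP:conf/soda/BabenkoGKS15}. For a packed text over alphabet $[0,\sigma)$ it takes $O(n\log\sigma/\sqrt{\log n}) = O(p(n,\sigma))$ time. The tree is stored as bitvectors in $O(n\log\sigma/\log n)=O(p'(n,\sigma))$ words. The construction also provides rank/select support on each node bitvector with $O(1)$ query time, within the same bounds. Before running the algorithm, we reduce to the case $\sigma\le n$. If $\sigma>n$, some letters cannot occur at all, and the $\sigma\log\sigma$ term in the bound already pays for reporting absence for all of them. Any remapping needed to bring the alphabet into range is within the same budget.

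With the wavelet tree in place, we answer all $\sigma$ letters by walking it. We traverse the tree top-down, recursing only into nodes whose bitvector has positive length; such a node corresponds to a nonempty set of occurring letters. Each leaf reached is a letter $c$ that occurs in $T$, and for it we want one position $i$ with $T[i]=c$. The leaf has at least one entry, and we map its first entry back up to the root. At each level this takes one $\mathrm{select}$ query on the parent's bitvector: $\mathrm{select}_0$ if we came from the left child, $\mathrm{select}_1$ if from the right. Letters whose leaf is never reached, i.e.\ lies in an empty subtree, are reported as absent.

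Cost: the traversal visits $O(\sigma)$ nodes. Each occurring letter needs $O(\log\sigma)$ select queries to climb to the root, for $O(\sigma\log\sigma)$ total. The answer array takes $O(\sigma)$ space. Together with the construction, this gives $O(p(n,\sigma)+\sigma\log\sigma)$ time and $O(p'(n,\sigma)+\sigma)$ space, as claimed.

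The main obstacle is making sure the cited wavelet-tree construction really delivers constant-time select structures within $O(p(n,\sigma))$ time and $O(p'(n,\sigma))$ space. The standard select structures are built in $o(n)$ time per level from packed bitvectors, summing to $O(n\log\sigma/\log n)$ over all levels. If select turned out to be unavailable, we would fall back to a simpler scheme. For each leaf, we scan its bitvector for its first entry; this takes $O(1)$ time using word-level operations (find the first set or unset bit). Then, at each ancestor, we locate the corresponding parent position using the same word-packed scans, again giving $O(\log\sigma)$ work per letter.
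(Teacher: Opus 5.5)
Your proposal is correct and follows the paper's proof: build the wavelet tree of Babenko et al.\ in $O(p(n,\sigma))$ time and $O(p'(n,\sigma))$ space, then answer $\mathit{select}_c(1)$ for each letter $c$ by climbing from its leaf to the root with one bitvector select per level, for $O(\sigma\log\sigma)$ total; your pruning of empty subtrees and the $\sigma>n$ remark are harmless extras. The fallback paragraph is not needed, since constant-time select is available, and it is also unsound as stated: a word-level ``find first set bit'' is not a select query, because the first set bit can lie many words away and, above the first level, you need the $j$-th $0$ or $1$ for arbitrary $j$, so that scheme would not give $O(\log\sigma)$ work per letter.
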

\begin{proof}
A $\mathit{select}_a(i)$ query for $T$ consists in computing the $i$th leftmost occurrence of character $a \in \Sigma$ in $T$ (or stating that such an occurrence does not exist). A wavelet tree for $T$ allows to answer $\mathit{select}_a$ queries in $O(\log \sigma)$ time~\cite{DBLP:conf/soda/GrossiGV03,DBLP:journals/jda/Navarro14}. Thus queries $\mathit{select}_a(1)$ for all $a \in \Sigma$ can be answered in $O(\sigma \log \sigma)$ time with the data structure. Finally, a wavelet tree can be constructed in $O(p(n,\sigma))$ time~\cite{DBLP:conf/soda/BabenkoGKS15} and occupies $O(p'(n,\sigma))$ space.
\qed\end{proof}

The representation of Shibata and Bannai is a substring equation system. 

\begin{definition}[{\cite{chi_reachable}}]
An instance of a \emph{substring equation system} (SES in short) for a string $T\in\Sigma^n$ is a triple $(n,\mathsf{Eq},\mathsf{Ch})$.
Each element of $\mathsf{Eq}$ is a triple $(i,j,\ell)$ with $i,j \in [1,n]$ and $1\le \ell \le n-\max(i,j)+1$, representing the equation $T[i\dd i+\ell)=T[j\dd j+\ell)$.
Each element of $\mathsf{Ch}$ is a pair $(k,c)$ with $k \in [1,n]$ and $c\in\Sigma$, representing the equation $T[k]=c$.
We say that $(n,\mathsf{Eq},\mathsf{Ch})$ \emph{represents} $T$ if $T$ is the unique string in $\Sigma^n$ satisfying all constraints in $\mathsf{Eq}$ and $\mathsf{Ch}$.
The \emph{size} of the system is defined as $|\mathsf{Eq}|+|\mathsf{Ch}|$.
\end{definition}

We now recall how the SES for $T$ of size $O(\chi(T))$ is constructed \cite{chi_reachable}. The set $\mathsf{Ch}$ is obtained by taking one arbitrary occurrence in $T$ of each distinct character. The set $\mathsf{Eq}$ includes the following equations: for every two $yxa,y'xa' \in \SRE(T)$, where $x$ is the longest common suffix of $yx$ and $y'x$, pick arbitrary occurrences $T[s \dd e)=yxa$ and $T[s' \dd e')=y'xa'$ and create an equation $T[s+|y| \dd s+|yx|)=T[s'+|y'| \dd s'+|y'x'|)$. Shibata and Bannai show how all these equations can be expressed by just $\chi(T)-1$ equations using a construction based on a compacted trie of substrings of $T$. 

\begin{theorem}
For a text $T \in [0, \sigma)^n$ such that $\log \sigma \le \sqrt{\log n}$, an SES of size $O(\chi(T))$ representing $T$ can be computed in $O(p(n,\sigma)+r\log^\epsilon n)$ time and $O(p'(n,\sigma)+r)$ space.
\end{theorem}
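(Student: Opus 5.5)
The plan is to build the two components of the SES separately. $\mathsf{Ch}$ comes from \cref{lem:Ch}. $\mathsf{Eq}$ comes from \cref{thm:main} followed by a sublinear re-implementation of the trie-based construction of Shibata and Bannai.

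For $\mathsf{Ch}$, \cref{lem:Ch} gives one occurrence of every letter in $O(p(n,\sigma)+\sigma\log\sigma)$ time. Since $\log\sigma\le\sqrt{\log n}$, we have $\sigma\le 2^{\sqrt{\log n}}$, so $\sigma\log\sigma = n^{o(1)} = O(p(n,\sigma))$. The space bound $O(p'(n,\sigma)+\sigma)$ is likewise $O(p'(n,\sigma))$.

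For $\mathsf{Eq}$, we first invoke \cref{thm:main}, which outputs $\SRE(T)$ within the claimed bounds. Each element is given as an occurrence $T[s\dd e)$, so as a reversed substring of $\TR$. The elements are already listed in co-lexicographic order as the output of \cref{alg:Prefixes}, and there are $\chi(T)\le 2r$ of them.

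The Shibata--Bannai equations are indexed by pairs of elements of $\SRE(T)$ through the longest common suffix of $yx$ and $y'x$ (the extensions with their last letters removed). As in their construction, these can be reduced to $\chi(T)-1$ equations by the compacted trie of the reversed strings. I would build this compacted trie (equivalently, a virtual tree over the suffix tree of $\TR$) directly from the sorted list, as follows.

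\begin{enumerate}
\item Strip the final character of each element and re-sort the reversed strings. I would rather avoid re-sorting: sort by $\ISAR$ of the starting positions (shifted by one) with the radix sort of \cref{lem:radix_sort}, in $O(\sqrt n+r)$ time plus $O(r\log^\epsilon n)$ for the $\ISAR$ queries.
\item Compute the longest common prefix of each lexicographically adjacent pair. This is one $\LCER$ query, capped by the two lengths, in $O(1)$ time.
\item Run the standard stack-based Cartesian-tree construction on these adjacent LCP values to obtain the compacted trie in $O(r)$ time.
\item For each internal node, equate its string depth across consecutive children's leaf representatives. Concretely, for adjacent leaves in sorted order, emit the equation between their common-suffix occurrences, which has length equal to the adjacent LCP value.
\end{enumerate}

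This gives exactly $\chi(T)-1$ equations, each produced in $O(1)$ time once the positions are known. It is the same equation set that Shibata and Bannai use, so correctness follows from their theorem that $(n,\mathsf{Eq},\mathsf{Ch})$ represents $T$. The total cost is $O(p(n,\sigma)+r\log^\epsilon n)$ time and $O(p'(n,\sigma)+r)$ space.

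The main obstacle is checking that the adjacent-pair equations coincide with, or at least imply, Shibata and Bannai's trie-based reduction. Their trie is over the strings $\rev{yx}$ (extension minus last letter), not over the full extensions. I would therefore redo step~1 carefully, making sure that duplicate $yx$ strings are handled (they produce trivial equations) and that adjacency in the sorted order of the $\rev{yx}$ captures every longest-common-suffix relation. The latter follows from the standard fact that the LCP of any two strings in a sorted list is the minimum of the adjacent LCPs between them. Chaining the adjacent equations then implies every pairwise equation.
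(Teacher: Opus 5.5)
\textbf{There is a gap in the chaining step.} Your overall route is the paper's: $\mathsf{Ch}$ from \cref{lem:Ch}, $\SRE(T)$ from \cref{thm:main}, trimming and reversing, radix-sorting by $\ISAR$ of the shifted starting positions, one equation per adjacent pair, and a chaining argument. The Cartesian tree of step~3 is unnecessary. The problem is the step you flag as the main obstacle.

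You sort by $\ISAR$, i.e.\ by the suffixes $\TR[s_i\dd n]$, not by the strings $X_i=\TR[s_i\dd e_i)$ themselves. You also cap each adjacent LCP at the two string lengths. The ``standard fact'' (the LCP of two entries is the minimum of the adjacent LCPs between them) holds for a list sorted \emph{as strings}. In suffix order, a short string can sit between two longer strings that share a longer common prefix.

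For arbitrary substrings, take $X_j=\mathtt{aaa}$, take $X_i=\mathtt{a}$ followed in $\TR$ by $\mathtt{ab}$, and take $X_k=\mathtt{aac}$. Their suffixes appear in the order $j<i<k$. The capped adjacent LCPs are $1,1$, yet $\lcp(X_j,X_k)=2$. So your chain yields only a one-letter equation where a two-letter one is required. Consequently neither ``it is the same equation set as Shibata and Bannai'' nor ``chaining implies every pairwise equation'' is established by your argument.

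\textbf{How to fix it.} The paper avoids the issue by \emph{not} capping. It emits $\TR[s_i\dd s_i+\ell)=\TR[s_{i+1}\dd s_{i+1}+\ell)$ with $\ell=\LCER(s_i,s_{i+1})$, which is still a true equation of $\TR$. Let $d=\lcp(X_j,X_k)$. Since the suffixes $\TR[s_i\dd n]$, $i\in[j,k]$, are sorted, they all share their first $d$ letters. Hence every adjacent $\ell$ is at least $d$, and the chain gives $\TR[s_j\dd s_j+d)=\TR[s_k\dd s_k+d)$.

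Alternatively, you may keep the cap, but then you need an argument specific to supermaximal extensions. Let $X_i=\rev{u}$ with $ua\in\SRE(T)$, whose chosen occurrence has $a=T[p]$, and suppose $|u|<d$ for some $j<i<k$. By sortedness, $x=T[p-d\dd p)$ equals the length-$d$ suffix of $\rev{X_j}$, which is right-maximal. Therefore $xa\in E_T$, and $ua$ is a proper suffix of it, contradicting supermaximality. So no intermediate $X_i$ is shorter than $d$, and the capped chain also works.

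With either fix, the rest of your proposal (validity of the equations, $\mathsf{Ch}$, time and space bounds) goes through.
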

\begin{proof}
Since $\sigma \le 2^{\sqrt{\log n}}$, the set $\mathsf{Ch}$ can be computed via \cref{lem:Ch} in $O(p(n,\sigma))$ time and $O(p'(n,\sigma))$ space. For $\mathsf{Eq}$, we present a construction that avoids building the compacted trie (and explicitly sorting the trimmed reversals of supermaximal right extensions).

We compute $\SRE(T)$ using \cref{thm:main}. For each $T[i \dd j] \in \SRE(T)$, we construct a substring $\TR[n+2-j \dd n+1-i] = \rev{(T[i \dd j))}$ (note that the last symbol of the supermaximal right extension is removed). Let $\TR[s_1 \dd e_1),\ldots,\TR[s_t \dd e_t)$ be the resulting substrings, where $t=\chi(T)$. Then the set $\mathsf{Eq}$ is constructed using \cref{alg:Eq} (which resembles \cref{alg:Prefixes}).

\begin{algorithm}[t]
\caption{Construction of $\mathsf{Eq}$}\label{alg:Eq}
Reorder substrings $\TR[s_i \dd e_i)$ so that $\TR[s_1 \dd n] \le \cdots \le \TR[s_t \dd n]$\;
$\mathsf{Eq} \gets \emptyset$\;
\For{$i:=1$ \KwSty{to} $t-1$}{
    $\ell \gets \LCER(s_i,s_{i+1})$\;
    \If{$\ell > 0$}{
        $\mathsf{Eq} \gets \mathsf{Eq} \cup \{(n+2-s_i-\ell,n+2-s_{i+1}-\ell,\ell)\}$\;
    }
}
\Return{$\mathsf{Eq}$}\;
\end{algorithm}

\cref{alg:Eq} works in $O(t+\sqrt{n})$ time---we use \cref{lem:radix_sort} for sorting---after a preprocessing that computes $\ISAR$~\cite{sublinear_suffix_array_construction} and the data structure for answering $\LCER$ queries~\cite{sublinear_BWT_construction_fullversion}. The preprocessing itself takes $O(p(n,\sigma))$ time and $O(p'(n,\sigma))$ space. The returned set satisfies $|\mathsf{Eq}|=\chi(T)-1$. We just need to argue that it expresses all the desired equations. The triple inserted to $\mathsf{Eq}$ corresponds to equation $$T[n+1-(s_i+\ell-1) \dd n+1-s_i]=T[n+1-(s_{i+1}+\ell-1) \dd n+1-s_{i+1}],$$ i.e., $\TR[s_i \dd s_i+\ell)=\TR[s_{i+1} \dd s_{i+1}+\ell)$ where $\ell=\lcp(\TR[s_i \dd n],\TR[s_{i+1} \dd n])$. Thus every triple  inserted into $\mathsf{Eq}$ implies an equation that holds in $\TR$ (hence, in $T$). Let us consider indices $j,k$ such that $1 \le j < k \le t$ and denote $d=\lcp(\TR[s_j \dd e_j),\TR[s_k \dd e_k))$. We need to argue that the equation  $$(*)\,\, \TR[s_j \dd s_j+d)=\TR[s_k \dd s_k+d)$$ is implied by the equations in $\mathsf{Eq}$. Because the suffixes $\TR[s_1\dd n],\ldots,\TR[s_t \dd n]$ are ordered lexicographically, $\TR[s_i \dd s_i+d)=\TR[s_j \dd s_j+d)$ holds for all $i \in [j,k]$. This implies that $\LCER(i,i+1) \ge d$ for all $i \in [j,k-1]$, and equation $(*)$ is implied by the equations created for $i \in [j,k-1]$.
\qed\end{proof}

\section{Future Work}

We have shown how to compute a suffixient array for $T[1\dd n]$ in $O(n\log \sigma/\sqrt{\log n}$ $+\min(r,\rR)\log^\epsilon n)$ time for any $\epsilon > 0$. 

While further reducing the first term hits conditional lower bounds as long as we build on top of the BWT \cite{sublinear_BWT_construction_fullversion}, the second term comes from computing $2\min(r,\rR)$ direct and inverse suffix array queries using structures that can be built in sublinear time \cite{sublinear_BWT_construction_fullversion}. It could be possible to run those queries faster, for example in batch rather than one by one.

We also plan to work further on the line of Theorem~\ref{thm:compressed}, by considering conversions from/to other compressed formats (see, e.g., \cite{GGJNlatin24.3}). It might be possible, for example, to find connections with other compressed indexes that build on right-maximal strings, like CDAWGs \cite{BBHMCE87}.

%
%
\bibliographystyle{splncs04}
\bibliography{bibliography}

\appendix 

\section{Sublinear time adaptation of Cenzato et al.'s algorithm}\label{sec:adaptation_cenzato}

Cenzato et al. \cite[Algorithm 6]{cenzato2025suffixientarraysnewefficient} show how to construct a smallest suffixient array for $T$ in linear time, starting from the  array $\BWTR$, the suffix array $\SAR$, and the LCP array $\LCPR$ of the reversed text. 
Additionally, it uses an array $R$ to keep information about the last position that is candidate to be a super-maximal extension for each $c \in \Sigma$. More precisely, in $R[c].len$ is stored the $\LCPR$ value for the last candidate $c$-run break, in $R[c].pos$ is stored its position in $T$, and $R[c].active$ is a flag telling if $R[c].pos$ was already added to the suffixient set (which case it is set to false) or not.

The algorithm sequentially scans $\BWTR$. For a run $\BWTR[i - 1 - j \dd i - 1] = c^{j + 1}$, it iterates from $k = i - j$ to $i - 1$ and keeps the minimum value within $\LCPR[i - 1 - j \dd k]$ in a variable $m$ and it updates $\LFR[c] = \LFR[c] + 1$. For a run-break $\BWTR[i - 1, i] = ac$, it keeps the minimum value within $\LCPR[i]$ and $m$ and stores it in $m$, updates $\LFR[c] = \LFR[c] + 1$ and calls to $eval$ to evaluate if $m < R[a].len$ (such a case $R[a].pos$ is added to the suffixient set computed so far, $m$ is stored in $R[a].len$ and $R[a].active$ is set to $false$), and it calls to $eval$ to determine if $\LCPR[\LFR[\BWTR[c]]] - 1 < R[c].len$ (such a case $R[c].pos$ is added to the suffixient set, $\LCPR[\LFR[\BWTR[c]]] - 1$ is stored in $R[c].len$, and $R[c].active$ is set to false). Additionally, if $R[\BWTR[i']].len < \LCPR[i]$, then $n - \SAR[i'] + 1$ is stored in $R[\BWTR[i']].pos$, $\LCPR[i]$ is stored in $R[\BWTR[i']].len$ and $R[\BWTR[i']].active$ is set to true (with $i' \in \{i - 1, i\}$). Finally, after traversing the whole $\BWTR$, any active candidate stored in $R$ is added to the suffixient set. Thus, we can group the iterations into traversing within a run to keep the minimum $\LCPR$ value and update $\LFR$, and performing $O(1)$ operations for each run-break.

We modify the algorithm to avoid scanning the $\BWTR$ position by position and just iterate from one run-break to the next one. To achieve this, we can note that, for two consecutive run-breaks $b_{i - 1}$ and $b_i$, the minimum within $\LCPR[b_{i - 1} \dd b_{i}]$ is exactly the length of the prefix shared by from the $(b_{i - 1})$-th suffix to the $b_i$-th suffix of $T$, so instead of iterating from $b_{i - 1}$ to $b_i$ we compute $\LCER(\SAR[b_{i - 1}], \SAR[b_i])$. In addition, if $1 < b_i - b_{i - 1}$ (so, $\BWTR[b_i \dd b_{i - 1} - 1] = c^{b_i - b_{i - 1}}$), we have to count $b_i - b_{i - 1} - 1$ additional occurrences of $c$ to get $\LFR[c]$ to point to $b_i - 1$ as desired (we counted one when we processed $b_{i - 1}$), so we update $\LFR[\BWTR[a]] = \LFR[\BWTR[a]] + (b_i - b_{i - 1} - 1)$. The rest of the algorithm just needs to know the positions of the run-breaks.
\cref{alg:sublinear_from_reversed_text} shows how to run the algorithm with these changes using the data structures of Kempa and Kociumaka \cite{sublinear_BWT_construction_fullversion}.

\begin{algorithm}[t]
\caption{Computing eval}\label{alg:eval}
\KwData{A set of characters $C$, an LCP value $\ell$, the candidate list $R$, and a suffixient set $S$}
\KwResult{The updated suffixient set $S$}
\ForEach{$c\in C$}{
\If{$\ell< R[c].len$}{
    \If{$R[c].active$}{
        $S\gets S\cup\{R[c].pos\}$\;
    }
    $R[c] \gets \{l,0,\false\}$\;
}
}
\end{algorithm}

\begin{algorithm}[t]
\caption{Adaptation of Cenzato et al.'s linear-time algorithm \cite[Algorithm 6]{cenzato2025suffixientarraysnewefficient} for computing a smallest suffixient array}\label{alg:sublinear_from_reversed_text}
\SetKwComment{Comment}{/* }{ */}
\KwData{A text $T$ and data structures for $\BWTR,\SAR$ and $\LCPR$}
\KwResult{A smallest suffixient set $S$ for $T$.}
{$R[1,\sigma] \gets (len \gets -1, pos \gets 0, active \gets false) \times \sigma$}\;
{$S \gets \emptyset$}\;
{$\LFR[a_1]\gets 0$}\;
\lFor{$i=2,\ldots, \sigma$}{$\LFR[a_i]\gets \LFR(a_{i-1})+occ(T,a_{i-1})$}
{$\LFR[\BWTR[1]] \gets \LFR[\BWTR[1]] +1 $}\;
\For{$i=1,\dots, \rR$}
{
    $\LFR[\BWTR[b_i-1]]\gets \LFR[\BWTR[b_i-1]] + (b_i-b_{i-1}-1)$\Comment*[r]{Assuming $b_0 = 1$}
    $\LFR[\BWTR[b_i]]\gets \LFR[\BWTR[b_i]] + 1$\;
    $m \gets \LCER(\SAR[b_{i-1}],\SAR[b_i])$\Comment*[r]{$m \gets \min_{j\in[b_{i-1},b_i]}\{\LCPR[j]\}$}
    $eval(\BWTR[b_i - 1], m, R, S)$\;
    \lIf{$R[\BWTR[b_i]].len \neq -1$}
    {
        $eval(\BWTR[b_i], \LCPR[\LFR[\BWTR[b_i]]] - 1, R, S)$}
    \For{$b' \in \{b_i - 1, b_i\}$}
    {
        \If{$R[\BWTR[b']].len < \LCPR[b_i]$}{$R[\BWTR[b'] \gets (\LCPR[b_i], n - \SAR[b'] + 1, true)$\;}
    }
}
$eval(\Sigma,-1,R,S)$
\end{algorithm}

The arrays $R$ and $\LFR$ use $O(\sigma)$ space. The sublinear data structures $\BWTR,\LCPR,$ and $\SAR$ take $O(p'(n,\sigma))$ space in total. Thus, the space consumption of the algorithm adds up to $O(p'(n,\sigma) + \sigma) = O(p'(n,\sigma))$.

Regarding running-time, to compute $occ(T,a_{i})$, we start by setting $occ(T, a_i) = 0$ for each $a_i$ and $occ(T, \BWTR[1]) = 1$. For simplicity, we assume $b_0 = 1$. Now, for each run-break $b_i$, we set $occ(T, \BWTR[b_i - 1]) = occ(T, \BWTR[b_i - 1]) + ((b_i - 1) - b_{i - 1})$ (so we are counting $(b_i - 1) - b_{i - 1}$ additional occurrences if $\BWTR[b_{i - 1}, b_i - 1] = c^{(b_i - 1) - b_{i - 1} + 1}$, and none otherwise) and $occ(T, \BWTR[b_i]) = occ(T, \BWTR[b_i]) + 1$. If $b_{\rR} < n$ (so, $\BWTR[b_{\rR}, n] = c^{n - b_{\rR} + 1}$), then we set $occ(T, \BWTR[b_{\rR}]) = occ(T, \BWTR[b_{\rR}]) + (n - b_{\rR})$ to count the occurrences within that last run. Since we perform $O(1)$ operations on each run-break, this step takes $O(\rR)$ time. The call to $eval$ on Line 10 takes $O(1)$ time, because the set given as the first argument is of size 1. The last call to $eval$ on Line 15 takes $O(\sigma)$ time as the set given has size $O(\sigma)$. 
The total running-time and space consumption are then dominated by that of building the base structures \cite{sublinear_BWT_construction_fullversion} on $\TR$ and using them $O(\rR)$ times.

This algorithm computes only a smallest suffixient set. We can sort this suffixient set in $O(\sqrt{n} + \chi)$ time (using \cref{lem:radix_sort}) by enhancing the prefixes with their co-lexicographical order, which can be retrieved when inserting elements in $S$ using $\rev{\ISA}$ queries. This does not increase (asymptotically) the space or time of the algorithm.
\end{document}